\documentclass[11pt]{article}

\usepackage[utf8]{inputenc}	
\usepackage{amsmath,amsthm,amsfonts,amssymb,amscd}
\usepackage{tikz}
\usetikzlibrary{shapes.geometric, positioning, math}
\usepackage{multirow,booktabs}
\usepackage[table]{xcolor}
\usepackage{fullpage}
\usepackage{lastpage}
\usepackage{enumitem}
\usepackage{fancyhdr}
\usepackage{mathrsfs}
\usepackage{wrapfig}
\usepackage{setspace}
\usepackage{calc}
\usepackage{multicol}
\usepackage{cancel}
\usepackage[margin=1in]{geometry}
\usepackage{amsmath}
\usepackage{diagbox}

\usepackage{empheq}
\usepackage{mdframed}
\usepackage[most]{tcolorbox}
\usepackage{xcolor}
\usepackage{algorithm}
\usepackage{algorithm}
\usepackage{algpseudocode}
\colorlet{shadecolor}{orange!15}
\newtheorem{theorem}{Theorem}[section]

\newtheorem{lemma}[theorem]{Lemma}
\newtheorem{proposition}[theorem]{Proposition}
\theoremstyle{definition}
\newtheorem{definition}[theorem]{Definition}

\usepackage{hyperref}

\newcommand{\Z}{\mathbb{Z}}
\newcommand{\R}{\mathbb{R}}

\title{Multi-tier Flexible Graph Connectivity\thanks{Grainger College of Engineering, Univ. of Illinois, Urbana-Champaign, Urbana, IL 61801. Email: {\tt\{karthe, ryjiang2, kk17\}@illinois.edu}. Supported in part by NSF grant CCF-2402667.}}
\author{Karthekeyan Chandrasekaran \and Raymond Jiang \and Krishna Kalathur}

\begin{document}
\date{}
\maketitle

\begin{abstract}
    Motivated by non-uniform edge failures in network design, we introduce a multi-tier model of flexible graph connectivity. In $k$-tier Flexible Graph Connectivity ($k$-tier FGC), the input is an undirected graph $G=(V, E)$ with non-negative edge costs, along with a classification of the edges into nested tiers $T_1\subseteq T_2\subseteq \ldots\subseteq T_k=E$ and non-negative integral tier requirements $q_1\le q_2\le \ldots\le q_k$. A cut $\emptyset \neq R\subsetneq V$ is safe if it is safe along one of the tiers, i.e., there exists $i\in [k]$ such that $|\delta(R)\cap T_i|\ge q_i$. The goal is to find a minimum-cost subset $F\subseteq E$ of edges such that the subgraph $(V, F)$ has no unsafe cuts. The case of $k=1$ corresponds to the min-cost $q_1$-edge-connected spanning subgraph problem, which is APX-hard. We design approximation algorithms for every fixed constant $k$ for three variants of $k$-tier FGC: (i) for $k$-tier FGC, we design an LP-based logarithmic approximation, (ii) for min-cardinality $k$-tier FGC, we design a combinatorial approximation whose factor depends only on the tier requirements $q_1$ and $q_k$, and (iii) for $k$-tier Flexible Multi-Graph Connectivity, where we are allowed to use multiple copies of each edge while paying the cost of the edge for each chosen copy of the edge, we design an LP-based $2$-approximation. 

\end{abstract}

\section{Introduction}
Network design is a fundamental area of combinatorial optimization with applications in communication, transportation, and infrastructure planning. The typical network design problem asks for a minimum-cost subgraph that satisfies certain connectivity requirements. A long line of work has produced powerful techniques---e.g., augmenting paths, LP duality, and matroid theory towards fast algorithms for polynomial-time problems such as shortest path, min-cost flow, and min-cost spanning tree, as well as combinatorial augmentation, primal-dual methods, and iterative rounding towards approximation algorithms for NP-hard problems such as min-cost Steiner tree and min-cost $k$-edge-connected spanning subgraph (denoted min-cost $k$-ECSS) \cite{KhullerVishkin1994, KR96, Jain2001, GGTW09}. Classical models, however, treat all edges uniformly: every edge in the network is equally likely to be attacked and hence, fault-tolerance requirements treat all edges to be at the same failure-level. Real-world networks rarely satisfy the uniform failure model. Some links are protected by redundancy while others are exposed. This motivates the study of non-uniform failure models, in which different sets of edges are subject to failure at different rates. 

Adjiashvili, Hommelsheim, and M\"{u}hlenthaler \cite{AHM2022} introduced $(p, q)$-Flexible Graph Connectivity problem (denoted $(p, q)$-FGC) to model non-uniform failure models: 
the input consists of a graph $G=(V, E)$ whose edges are partitioned into ``safe'' and ``unsafe'' edges, and the goal is to buy a minimum-cost subset $F\subseteq E$ of edges such that for every choice of at most $q$ unsafe edges $F'\subseteq F$, the subgraph $(V, F\setminus F')$ is $p$-edge-connected. While $(p, q)$-FGC neatly captures the dichotomy between safe and unsafe edges, many applications involve a richer hierarchy: edges may be vulnerable in graded tiers, and the operator may demand different levels of connectivity against failure of edges in different tiers. For instance, the operator may require high connectivity when only the least robust edges can fail, but can tolerate weaker connectivity when the adversary is allowed to attack more robust edges. Such multi-tiered failure scenarios necessitate non-uniform failure models that go beyond the safe/unsafe dichotomy. 

Motivated towards addressing multi-tiered failure scenarios, we consider the \emph{$k$-tier Flexible Graph Connectivity problem}. 
\begin{mdframed}
    \noindent\textbf{$k$-tier Flexible Graph Connectivity ($k$-tier FGC).}
    
    \noindent\textbf{Given.} An undirected graph $G=(V, E)$ with non-negative edge costs $c: E\to\R_{\ge 0}$, a nested family of edge-tiers 
    $\mathcal{T}=(T_1, T_2, \ldots, T_k)$ where 
    $T_1\subseteq T_2\subseteq \ldots \subseteq T_k=E$, and non-negative integral tier requirement vector $\mathcal{Q}=(q_1, q_2, \ldots, q_k)\in \mathbb{Z}_{\geq 0}^k$ where $q_1\le q_2\le \ldots \le q_k$. 
    
    \noindent\textbf{Definition.} A cut $\emptyset \neq R\subsetneq V$ is \emph{$(\mathcal{T}, \mathcal{Q})$-safe} if there exists $i\in [k]$ such that $|\delta(R)\cap T_i|\ge q_i$ and \emph{unsafe} otherwise. The graph $G$ is $(\mathcal{T}, \mathcal{Q})$-FGC if every cut $\emptyset\neq R\subsetneq V$ is $(\mathcal{T}, \mathcal{Q})$-safe in $G$. 

    \noindent\textbf{Goal.} A minimum-cost subset $F\subseteq E$ such that the subgraph $(V, F)$ is $(\mathcal{T}, \mathcal{Q})$-FGC.  
\end{mdframed}
We observe that verifying whether a given candidate $F\subseteq E$ is feasible for $k$-tier FGC is itself non-trivial: one must check that for each cut $R$ whether at least one of the $k$ tier inequalities hold. We show that this reduces to a multi-objective minimum-cut problem (see Lemma \ref{lem:verification-to-multiobj-min-cut}) and is hence, polynomial-time solvable for every fixed constant $k$. Multi-objective minimum-cut is NP-hard if the number of objectives is part of input \cite{AZ06}. For this reason, we restrict our focus to fixed constants $k$ throughout this work since feasibility is efficiently verifiable only in this regime. We also assume throughout that the input instance is feasible.  

Multi-tier FGC subsumes several classical and recent network-design problems. The case of $k=1$ corresponds to min-cost $q_1$-ECSS: we observe that a cut is safe if and only if it has at least $q_1$ edges in $T_1=E$. The case of $k=2$ corresponds to $(q_1, q_2-q_1)$-FGC: here $T_1$ plays the role of safe edges and $T_2-T_1$ plays the role of unsafe edges. Multi-tier FGC therefore unifies and substantially generalizes both classical edge-connectivity and the recent flexible-connectivity problems. 
While the unification and generalization make it a compelling model, it is to be noted that many of the known techniques for $1$ and $2$ tier-FGC for small values of tier-requirements---uncrossability arguments and cover small cuts---do not appear to extend to a larger number of tiers (e.g., $k=3$).

The case of $k=1$ corresponds to min-cost $q_1$-ECSS which is APX-hard already for $q_1=2$ \cite{Fer98}, while Jain's seminal $2$-approximation for survivable network design \cite{Jain2001} applies to min-cost $q_1$-ECSS for all $q_1$. The case of $k=2$ corresponds to $(p, q)$-FGC which has been intensively studied since its introduction in 2020. A sequence of works has improved the approximation factor for $(p, q)$-FGC for various parameter regimes \cite{AHM2022, BCGI23, BHKS24, CJ25, BCKS25, Nut25, IV25, Ban25, Nut25-tight}. Most relevant to our work, Ibrahimpur and V\'{e}gh \cite{IV25} recently obtained an $O(\log{n})$-approximation for $(p, q)$-FGC for all $p$ and $q$ via an LP-relaxation and an independent rounding analysis, where $n$ is the number of vertices in the input graph. We generalize their result to obtain a log-approximation for $k$-tier FGC. 

\begin{theorem}\label{thm:log-approx}
    There exists a polynomial-time randomized algorithm for $k$-tier FGC that returns, with probability at least $1/3$, a feasible solution with approximation factor $O(k^2 \log{n})$. 
\end{theorem}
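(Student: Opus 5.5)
We follow the approach of Ibrahimpur and V\'{e}gh \cite{IV25}: write an LP relaxation with knapsack-cover-style constraints, solve it, and round by scaling the fractional solution by $\Theta(k^2\log n)$ and sampling each edge independently. For a cut $R$ and a subset $S$ of edges, we say $S$ \emph{covers} $R$ at tier $i$ if $|(\delta(R)\setminus S)\cap T_i|\ge q_i - |S\cap T_i|$. The relaxation uses variables $x\in[0,1]^E$ and the following constraints. For every cut $R$ and every ``pre-bought'' set $B\subseteq\delta(R)$ with $|B\cap T_i|<q_i$ for all $i$ (so $R$ is still unsafe in $B$), we require that the residual fractional mass is large in at least one tier. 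Since a disjunction over $i$ is not convex, we instead impose a single aggregated constraint per $(R,B)$:
\[
\sum_{i=1}^k \frac{x\bigl((\delta(R)\setminus B)\cap T_i\bigr)}{q_i-|B\cap T_i|}\ \ge\ 1 .
\]
Every feasible integral $F$ satisfies this, since some tier $i$ has $|F\cap\delta(R)\cap T_i|\ge q_i$, which makes the $i$-th term at least $1$. We then solve approximately via the ellipsoid method. Separation over exponentially many $(R,B)$ pairs is the delicate part. We would first fix, for each $i$, a guess of $|B\cap T_i|$ (polynomially many guesses for constant $k$). We would then argue that the worst $B$ takes the highest-$x$ edges in the appropriate tiers, and use the multi-objective min-cut machinery of Lemma~\ref{lem:verification-to-multiobj-min-cut} to reduce to polynomially many candidate cuts. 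If exact separation fails, we settle for a relaxed LP, losing a factor of $k$.

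\textbf{Rounding.} Let $\alpha = Ck^2\log n$. Include every edge with $x_e\ge 1/\alpha$ deterministically, and sample every other edge independently with probability $\alpha x_e$. The expected cost is at most $\alpha\cdot\mathrm{LP}$, so by Markov's inequality the cost exceeds $3\alpha\cdot\mathrm{LP}$ with probability at most $1/3$. The main step is to show that all cuts are safe with probability at least $2/3$.

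\textbf{Safety of a fixed cut.} Fix a cut $R$ and let $B$ be the set of deterministically bought edges in $\delta(R)$. If $R$ is already safe under $B$ we are done. Otherwise the aggregated constraint gives a tier $i$ whose term is at least $1/k$, meaning the residual $x$-mass in $(\delta(R)\setminus B)\cap T_i$ is at least $(q_i-|B\cap T_i|)/k$. After scaling by $\alpha$, the expected number of sampled edges there is at least $Ck\log n\,(q_i-|B\cap T_i|)$, with each sampling probability at most $1$. A Chernoff bound then shows that fewer than $q_i-|B\cap T_i|$ of these edges are sampled with probability at most $n^{-\Omega(Ck(q_i-|B\cap T_i|))}$.

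\textbf{Union bound over cuts (main obstacle).} There are exponentially many cuts, so we cannot union-bound over them directly. We would follow the Karger-style argument of \cite{IV25}. Group cuts by their tier-$i$ fractional value relative to the minimum fractional cut value in that tier. Since $x(\delta(R)\cap T_i)$ is a cut function of a weighted graph, the number of cuts with value at most $\beta$ times the minimum is at most $n^{2\beta}$. The failure probability for such cuts decays like $n^{-\Omega(C\beta)}$, which beats this count once $C$ is a large constant. The difficulties are that $B$ varies with $R$ and that the witnessing tier $i$ depends on $R$. We handle the tier choice with a union bound over the $k$ tiers. We handle the dependence on $B$ by partitioning cuts according to $|B\cap T_i|$, which has at most $q_i+1$ values, and applying Karger's cut counting to the weighted graph restricted to the non-deterministic tier-$i$ edges. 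The extra factor $k$ in $\alpha$, beyond the $k$ lost in aggregation, pays for the $k$-fold union bound and for the normalization across tiers, giving the $O(k^2\log n)$ factor. Combining the cost and safety bounds, the algorithm succeeds with probability at least $1/3$.
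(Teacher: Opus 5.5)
Your constraint is exactly \eqref{eq:genconstraint} divided by $\prod_i(q_i-|J\cap T_i|)$ on the nontrivial pairs. Your rounding and single-cut Chernoff bound coincide with Algorithm~\ref{alg:edge_selection} and Lemma~\ref{lem:deficient}. The genuine gap is the union bound over cuts. Grouping cuts by $x(\delta(R)\cap T_i)$ relative to $\lambda_i:=\min_R x(\delta(R)\cap T_i)$ has nothing to anchor to. The LP gives no lower bound on $\lambda_i$ for $i<k$: lower tiers need not even span $V$, so $\lambda_1$ can be $0$. Meanwhile the tier-$i$ failure probability at $R$ depends on the absolute residual mass compared with $q_i-|B\cap T_i|$, not on a ratio to $\lambda_i$. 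You also cannot anchor at $q_i/k$ instead, because Theorem~\ref{thm:kargercuts} needs the \emph{global} minimum of the cut function being counted. For instance, if $T_i$ consists of many disjoint cycles, an arc of one cycle together with an arbitrary union of the other cycles gives exponentially many cuts with the same tier-$i$ value (that of two cycle edges). Restricting to the non-deterministic edges or bucketing by $|B\cap T_i|$ does not create such a lower bound. Also, a $k$-fold union bound costs a factor $k$ in probability, not in the exponent, so it is not what the second factor of $k$ in $\alpha$ pays for.

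The missing idea is to count all cuts with respect to one aggregated capacity, $u_x(e)=x_e\sum_{i:\,e\in T_i}\prod_{j\neq i}q_j$. Then $u_x(\delta(R))=\sum_i\bigl(\prod_{j\neq i}q_j\bigr)x(\delta(R)\cap T_i)$ is $\prod_jq_j$ times the left side of your constraint with $B=\emptyset$, so its global minimum is at least $\prod_jq_j$. Cuts with $u_x(\delta(R))<2k\prod_jq_j$ are $2k$-approximate min cuts, so there are $O(n^{4k})$ of them. Each fails with probability $n^{-\Omega(Ck)}$ by your single-cut argument, and beating $n^{4k}$ is what the second $k$ pays for. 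For $\ell\ge 2k$, cuts with $u_x(\delta(R))\in[\ell\prod_jq_j,(\ell+1)\prod_jq_j)$ number $O(n^{2\ell+2})$. By pigeonhole some tier $j$ has $x(\delta(R)\cap T_j)\ge \ell q_j/k$. Since $|B\cap T_j|<q_j$ and $x\le 1$, the residual mass is at least $\ell q_j/(2k)$, so the failure probability is $n^{-\Omega(C\ell)}$, which sums against the count (Lemmas~\ref{lem:deficient_big} and~\ref{lem:high_prob}).

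The same $u_x$ also repairs your separation step, which as sketched does not go through: Lemma~\ref{lem:multiobj-min-cut} enumerates only exact multi-objective minimizers, not the near-minimal cuts you need. If $\min_Ru_x(\delta(R))<\prod_jq_j$, the $B=\emptyset$ constraint is violated. Otherwise any $R$ with $u_x(\delta(R))\ge k\prod_jq_j$ has a tier $m$ with $x(\delta(R)\cap T_m)\ge q_m$, which together with $x\le 1$ implies every constraint at $R$ (Lemma~\ref{lem:gencapacitated}). So only the $O(n^{2k})$ $k$-approximate min cuts of $u_x$ need your guess-and-greedy check, and no relaxed LP is needed.
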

We recall that $k$ is a fixed constant and hence, the approximation factor in Theorem \ref{thm:log-approx} is $O(\log{n})$. 
Our algorithm for Theorem \ref{thm:log-approx} is LP-based. Our LP is a \emph{weakening} of the Ibrahimpur-V\'{e}gh LP for $(p, q)$-FGC. We weaken it since it enables a simpler and easy-to-generalize analysis without losing on the logarithmic approximation factor. We also exhibit an instance with an integrality gap of $2k$ for our LP. 

We next consider the \emph{minimum-cardinality} version of the problem, where all edge costs are unit. For $p$-ECSS, the cardinality version has much better approximation factor than the cost version: \cite{CT00, GGTW09} give $(1+O(1/p))$-approximation. We let $\alpha_p$ denote the best-known approximation ratio for min-cardinality $p$-ECSS. For min-cardinality $(p,q)$-FGC, Nutov \cite{Nut25} gave a combinatorial $(\alpha_p+2q/p)$-approximation. We generalize Nutov's ideas to the multi-tier setting:
\begin{theorem}\label{thm:cardinality}
There is a polynomial-time $(\alpha_{q_1}+ 2(q_k-q_1)/q_1)$-approximation for min-cardinality $k$-tier FGC, where $\alpha_{q_1}$ is the best known approximation ratio for min-cardinality $p$-ECSS. 
\end{theorem}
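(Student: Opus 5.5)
Following Nutov's approach for $(p,q)$-FGC, we build the solution in two phases. First, we compute a $q_1$-edge-connected spanning subgraph $F_1$ of $G$ using the $\alpha_{q_1}$-approximation for min-cardinality $q_1$-ECSS. Every feasible solution $F^*$ of $k$-tier FGC is $q_1$-edge-connected: for any cut $R$ there is a tier $i$ with $|\delta_{F^*}(R)\cap T_i|\ge q_i\ge q_1$, hence $|\delta_{F^*}(R)|\ge q_1$. So $|F_1|\le \alpha_{q_1}\,\mathrm{OPT}$. We also record the lower bound $\mathrm{OPT}\ge q_1 n/2$, which follows from degree cuts.

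Second, we augment $F_1$ to fix all unsafe cuts. Every unsafe cut $R$ of $F_1$ already has $|\delta_{F_1}(R)|\ge q_1$, and it must be made safe. The plan is to make it safe through the top tier, by raising the cut to $q_k$ edges of $T_k=E$, or through some intermediate tier. The cleanest way, mirroring Nutov, is a greedy augmentation that adds edges one at a time. We maintain $F\supseteq F_1$. While $F$ has an unsafe cut (detected in polynomial time by Lemma~\ref{lem:verification-to-multiobj-min-cut}), we take a minimal unsafe cut $R$ and add an edge of $\delta_{F^*}(R)\setminus F$. Such an edge exists because $R$ is safe in $F^*$ but unsafe in $F$, and we use a known feasible completion, namely $E$ itself. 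We then bound the number of added edges.

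The counting argument is the key step and the main obstacle. Nutov's bound of $2q/p$ amounts to adding at most $q$ edges "per unit of degree", charged against $\mathrm{OPT}\ge pn/2$; here the target is at most $(q_k-q_1)n$ augmenting edges. We would aim to show that, after processing, each vertex is incident to at most $q_k-q_1$ added edges. Equivalently, the additions form a forest-like structure: each iteration kills unsafe cuts whose deficiency relative to tier $k$ is at most $q_k-q_1$, since any cut already has $q_1$ edges. To make this rigorous, we would order the augmentation by deficiency level $j=1,\dots,q_k-q_1$. At level $j$ we ensure every cut $R$ with $|\delta_F(R)|<q_1+j$ is safe, either because it is safe via some lower tier or by adding edges. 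We then argue that at each level the added edges form a forest; the hardest part is showing that the relevant minimal violated cuts are pairwise disjoint, which requires an uncrossing property that must hold despite the union-of-tiers safety condition. A forest contributes at most $n-1$ edges per level, so the total is at most $(q_k-q_1)(n-1)\le \frac{2(q_k-q_1)}{q_1}\,\mathrm{OPT}$.

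If uncrossing fails for general tier structures, the fallback is to avoid the tier structure entirely at the augmentation stage. We would augment $F_1$ to a $q_k$-edge-connected subgraph restricted only to those cuts that are unsafe, or more crudely increase connectivity one unit at a time, from $q_1$ up to $q_k$, via $q_k-q_1$ successive spanning-forest augmentations. Each such augmentation adds at most $n-1$ edges, using standard increase-connectivity-by-one arguments on the family of minimum cuts, which are uncrossable. The final graph is $q_k$-edge-connected, hence safe via tier $k$, and the cost is $|F_1|+(q_k-q_1)(n-1)\le(\alpha_{q_1}+2(q_k-q_1)/q_1)\mathrm{OPT}$. The only requirement is that $G$ itself be capable of such augmentation; where $G$ is not $q_k$-connected, we must restrict to cuts of $G$ that are unsafe, and we expect this restriction to need care.
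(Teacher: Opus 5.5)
Your overall architecture is the right one. You compute a $q_1$-ECSS and then run $q_k-q_1$ augmentation levels with the invariant that, after level $j$, every cut $R$ with $|\delta_F(R)|<q_1+j$ is safe. This corresponds to the paper's sequence of relaxed requirement vectors $\mathcal{Q}^i_j$. As you note, any cut still bad at level $j$ has exactly $q_1+j-1$ edges of $F$, so one new edge per bad cut suffices.

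\textbf{The gap.} It lies in the step you yourself call the crux: bounding the edges added per level by $n-1$. You propose to get this from an uncrossing property making minimal violated cuts pairwise disjoint. You do not establish this property, and you have no reason to expect it under the union-of-tiers safety condition; the paper remarks that uncrossing techniques do not seem to extend beyond two tiers. Your fallback does not rescue the argument. A feasible instance need not have $G$ itself $q_k$-edge-connected: take a cut of $G$ consisting of exactly $q_1$ edges of $T_1$, with $q_k>q_1$. In that case, augmenting to $q_k$-edge-connectivity is impossible. Restricting to unsafe cuts puts you back with a family for which you have no uncrossing property. Your intermediate aim that every vertex meets at most $q_k-q_1$ added edges is both false in general (one level's additions can form a star) and unnecessary.

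\textbf{The missing idea.} No uncrossing is needed. An inclusion-wise minimal set $C\subseteq E\setminus F$ that hits every cut of an \emph{arbitrary} family is acyclic (Proposition~\ref{prop:mincoverforest}). Suppose $C$ contained a cycle through an edge $e$, and let $R$ be any cut in the family with $e\in\delta(R)$. The rest of the cycle is a path between the endpoints of $e$, which lie on opposite sides of $R$, so some other cycle edge also crosses $R$. Hence $C-e$ is still a cover, contradicting minimality. The paper therefore enumerates each level's bad cuts via multiobjective min-cut (Proposition~\ref{lem:polyviolated}), takes a minimal cover, and gets $|C|\le n-1< 2\,\mathrm{OPT}/q_1$ per level.

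Your one-edge-at-a-time greedy can also be made to work, but only if it follows the level structure and you supply the corresponding argument. A cut that is bad at level $j$ for the current $F$ already had exactly $q_1+j-1$ edges at the start of that level, so no edge added earlier in the level crosses it. The next edge therefore cannot close a cycle. Choosing an arbitrary minimal unsafe cut, as in your first description, gives no such control.
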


Next, we consider the multi-use model of the problem termed $k$-tier Flexible Multigraph Connectivity problem (denoted $k$-tier FMGC). Here, for the same inputs, the problem seeks a non-negative integral vector $x\in \Z^E_{\ge 0}$ minimizing $\sum_{e\in E}c_e x_e$ such that for every $\emptyset\neq R\subsetneq V$, there exists $i\in [k]$ with $x(\delta(R)\cap T_i)\ge q_i$. This relaxes the constraint on each edge, allowing multiplicity. For the case of $k=1$, namely min-cost $p$-edge connected spanning multigraph (denoted $p$-ESCM), a  $(1+O(1/p))$-approximation is known and this is the best possible approximation factor \cite{HKZ25, KS25}. For the case of $k=2$, namely $(p, q)$-Flexible Multigraph Connectivity (denoted $(p, q)$-FMGC), Simmons' \cite{simmons-thesis} designed a $2$-approximation via the cut-based formulation for min-cost spanning tree. We extend Simmons' approach to $k$-tier FMGC. 
\begin{theorem}\label{thm:multi-use}
    There is a polynomial-time $2$-approximation for $k$-tier FMGC. 
\end{theorem}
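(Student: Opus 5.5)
We reduce $k$-tier FMGC to a min-cost spanning-tree LP computation, following Simmons' approach for $(p,q)$-FMGC. Let $\mathrm{OPT}$ be the optimum value. We set up an LP relaxation and compare it to spanning-tree costs under $k$ different cost functions.

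\textbf{Lower bounds from the optimum.} Let $x^*$ be an optimal integral solution. For $i\in[k]$ define modified costs $c^{(i)}_e=c_e$ for $e\in T_i$ and $c^{(i)}_e=+\infty$ (or a huge number) otherwise. Consider the LP
\[
\min\Big\{\sum_e c_e y_e:\ y\ge 0,\ \sum_{i\in[k]}\frac{y(\delta(R)\cap T_i)}{q_i}\ge 1\ \ \forall\,\emptyset\ne R\subsetneq V\Big\}
\]
(tiers with $q_i=0$ make every cut trivially safe, so assume all $q_i\ge 1$). Since every cut of $x^*$ satisfies some tier inequality $x^*(\delta(R)\cap T_i)\ge q_i$, the vector $x^*$ is feasible, so the LP value is at most $\mathrm{OPT}$. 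The nesting $T_1\subseteq\dots\subseteq T_k$ lets us rewrite the constraint as $\sum_e y_e\,w_e\ge 1$ on each cut, where $w_e=\sum_{i:\,e\in T_i}1/q_i$. Thus $w_e=\sum_{i\ge t(e)}1/q_i$, where $t(e)$ is the smallest tier containing $e$. Substituting $z_e=w_ey_e$ turns this into the cut LP for spanning trees with costs $c_e/w_e$. By the half-integrality of the cut relaxation of MST (the bidirected/cut LP has value at least half the MST cost), the LP value is at least $\tfrac12\,\mathrm{MST}(c/w)$.

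\textbf{Rounding.} The cut-LP optimum $z$ is itself within a factor $2(1-1/n)$ of the MST, so we can take a min-cost spanning tree $\mathcal{T}$ under costs $c_e/w_e$. Integral candidate: on each edge $e\in\mathcal{T}$, buy $\lceil 1/w_e\rceil$ copies. Every cut contains a tree edge $e$. If $e$ lies in tier $t=t(e)$, then $x_e\ge 1/w_e$, so $\sum_{i\ge t} x_e/q_i\ge 1$. This alone does not yet yield a single tier with $x_e\ge q_i$. The cleaner route is to exploit the nesting: buying $q_{t(e)}$ copies of $e$ makes the cut safe along tier $t(e)$. This suggests redefining the relaxation per edge as $x_e\ge q_{t(e)}$ on one edge of each cut. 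Then $\mathrm{OPT}\ge$ the fractional relaxation with weights $1/q_{t(e)}$, and the rounding buys $q_{t(e)}$ copies on the MST edges under costs $c_eq_{t(e)}$.

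\textbf{Main obstacle.} The hard part is reconciling the two bounds. $\mathrm{OPT}$ may satisfy a cut using many edges from tier $i$ with $t(e)<i$, each priced with weight $1/q_i$ rather than $1/q_{t(e)}$. So I must show that the LP with constraint $\sum_i y(\delta(R)\cap T_i)/q_i\ge1$ still dominates half of the MST under costs $c_e q_{t(e)}$. The loss must be bounded by the factor 2 from the cut LP, not by an extra factor $k$. I would attempt this via the LP dual: a laminar family of cuts packed against costs, as in the Edmonds/Goemans–Williamson analysis of the cut LP. I would take the primal-dual (Kruskal-type) growth process in which each active component raises its dual $\pi_R$, and edge $e$ goes tight when $\sum_{R:e\in\delta(R)}\pi_R=c_e q_{t(e)}$. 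Then I would verify that $\pi$ scaled appropriately is dual-feasible for the multi-tier LP. Dual feasibility there requires $\sum_R \pi_R\,w_e\le c_e$, which needs $q_{t(e)}\,w_e\le 1$. This fails in general, so I would instead try to charge tiers separately. For this I would use Simmons' idea of adding tiers' MSTs with costs $c\cdot(q_i-q_{i-1})$ on $T_i$, telescoping as a sum of $k$ nested forests, each bounded by twice its share of $\mathrm{OPT}$.
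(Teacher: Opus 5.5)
\textbf{There is a genuine gap: the lower bound.} Your final algorithm is the right one: a minimum spanning tree under costs $q_{t(e)}c_e$, then $q_{t(e)}$ copies of each tree edge. Your feasibility argument for it is also fine.

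In the ``cleaner route'' paragraph you only assert that $\mathrm{OPT}$ is at least the value of the cut LP with weights $1/q_{t(e)}$. In the ``main obstacle'' paragraph you then try to get the needed bound from your first LP. Its constraint $\sum_i y(\delta(R)\cap T_i)/q_i\ge 1$ gives edge $e$ the weight $\sum_{i\ge t(e)}1/q_i$, which can be up to $k$ times larger than $1/q_{t(e)}$.

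The statement you set out to prove there is false. Take $T_1=\dots=T_k=E$, $q_1=\dots=q_k=q$, and a unit-cost cycle $C_n$. Then $y_e=q/(2k)$ is feasible with value $qn/(2k)$. Half of the MST under costs $q_{t(e)}c_e=q$ is $q(n-1)/2$, which is larger whenever $k\ge 2$ and $n\ge 3$. This is why your dual-feasibility check fails, and comparing your rounding against that LP can certify at best a factor of about $2k$. The telescoping fallback is not specified enough to close the gap.

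\textbf{The missing idea is one line.} Your worry that $\mathrm{OPT}$'s lower-tier edges get the ``wrong'' weight has the direction reversed. If $x^*(\delta(R)\cap T_i)\ge q_i$, then every $e\in\delta(R)\cap T_i$ has $t(e)\le i$, hence $q_{t(e)}\le q_i$, and so
\[
\sum_{e\in\delta(R)}\frac{x^*_e}{q_{t(e)}}\;\ge\;\sum_{e\in\delta(R)\cap T_i}\frac{x^*_e}{q_{t(e)}}\;\ge\;\frac{x^*(\delta(R)\cap T_i)}{q_i}\;\ge\;1 .
\]
So $x^*$ is directly feasible for the relaxation $\sum_i x(\delta(R)\cap S_i)/q_i\ge 1$ with $S_i=T_i\setminus T_{i-1}$. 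This is exactly the paper's Lemma~\ref{lem:FMGC-feasibility}.

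Substituting $y_e=x_e/q_{t(e)}$ turns this relaxation into the spanning-tree cut LP with costs $q_{t(e)}c_e$. Its optimum is at least half the MST weight, so your rounding costs at most $2\,\mathrm{OPT}$. No primal-dual argument or per-tier charging is needed: the nesting of the tiers together with $q_1\le\dots\le q_k$ does all the work.
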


We note that after the conference publication of \cite{IV25}, Ibrahimpur and V\'{e}gh independently also introduced the multi-tier FGC model and proved Theorem \ref{thm:log-approx} in their journal publication \cite{Ibrahimpur2026journal}. Our model and results were discovered simultaneously and independently after the conference publication and before the journal publication of Ibrahimpur and V\'{e}gh's work.

\subsection{Feasibility Verification}
In this section, we reduce the problem of verifying whether all cuts are safe to a multiobjective min-cut problem.  We begin with a definition of the multiobjective min-cut problem. 
\begin{mdframed}
    \noindent\textbf{Multiobjective Global Min Cut.}
    
    \noindent\textbf{Given.} An undirected graph $G=(V, E)$ with non-negative edge costs $c_1, c_2, \ldots, c_k: E\to\R_{\ge 0}$. 
    
    \noindent\textbf{Goal.} $\min_{\emptyset\neq R\subseteq V}\max_{i\in [k]}c_i(\delta(R))$. 
\end{mdframed}
We need the following result on multiobjective global min-cut. 
\begin{lemma}[\cite{AZ06}] \label{lem:multiobj-min-cut}
    There exists an algorithm that runs in time $O(mn^{2k})$ to solve multiobjective global min-cut, where $m$ is the number of edges and $n$ is the number of vertices in the input graph. Moreover, if the optimum value $\lambda$ of the multiobjective global min-cut is strictly positive, then 
    $|\{\emptyset\neq R\subsetneq V: \max_{i\in [k]}c_i(\delta(R))=\lambda\}|=O(n^{2k})$, and all non-empty proper subsets $R\subset V$ with  $\max_{i\in [k]}c_i(\delta(R))=\lambda$ can be enumerated in time $O(mn^{2k})$. 
\end{lemma}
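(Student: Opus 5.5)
We follow the Armon--Zwick approach via random contraction, generalizing Karger's analysis to the max of $k$ cost functions. Let $\lambda>0$ be the optimum and fix any cut $R^*$ with $\max_i c_i(\delta(R^*))=\lambda$.

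\textbf{Contraction step.} Run a contraction procedure that, at each step, picks an objective $i\in[k]$ and contracts an edge $e$ with probability proportional to $c_i(e)$. The graph contracts down to a constant number of supernodes, roughly $2k$, after which all remaining partitions are enumerated by brute force.

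\textbf{Survival bound.} The key averaging fact is that when $r$ supernodes remain, every singleton supernode $v$ is a cut, so $\max_i c_i(\delta(v))\ge\lambda$. To use this, choose the objective $i$ adaptively, or choose a random convex combination $c_w=\sum_i w_ic_i$. We need $c_w(E)\ge r\lambda/2$ while $c_w(\delta(R^*))\le\lambda$. The second inequality holds for any convex weights $w$. For the first, it suffices to pick $w$ so that $c_w(\delta(v))\ge\lambda$ for many $v$.

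This is where fixed $k$ enters: we cannot always find one $w$ that works. Instead we bound the failure probability by a loss of order $k/r$ per step rather than $2/r$. Concretely, choose $i$ uniformly at random. Some $i$ has $c_i(\delta(v))\ge\lambda$ for at least $r/k$ supernodes, so $c_i(E)\ge r\lambda/(2k)$. The probability that $R^*$ survives the step is then at least $\frac1k(1-2k/r)$ in the worst case. That bound is too lossy, so we would refine the scheme to a \emph{deterministic} choice of the $i$ maximizing $c_i(E)/\ldots$.

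A cleaner option is to contract with probability proportional to $\max$-normalized weights $\sum_i c_i(e)/c_i(E)$. The chance of hitting $\delta(R^*)$ is then $\frac1k\sum_i c_i(\delta(R^*))/c_i(E)$. We would need each $c_i(E)\ge r\lambda/(2k)$, which fails in general. So the main obstacle is the survival analysis, and we expect to prove a survival probability of $\Omega(n^{-2k})$ by a product $\prod_r(1-2k/r)\approx \binom{n}{2k}^{-1}$.

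\textbf{Counting and enumeration.} Given such a survival bound, the counting claim follows exactly as in Karger's argument: distinct optimal cuts are disjoint survival events, so there are $O(n^{2k})$ of them. For enumeration and the running time, we would derandomize. The alternative route I would try first is structural. Since each optimal cut $R$ is a $(1\cdot k)$-approximate min cut for the single cost $\sum_i c_i$, because $\sum_i c_i(\delta(R))\le k\lambda\le k\cdot\min\sum_i c_i$, Karger's bound on $\alpha$-approximate min cuts gives at most $O(n^{2k})$ such cuts. These can be enumerated in polynomial time, e.g., by Nagamochi--Nishimura--Ibaraki, in $O(mn^{2k})$ time. We then evaluate $\max_i c_i(\delta(R))$ for each enumerated cut and keep the minimizers. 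This route gives the optimum value, the count, and the enumeration at once, and I expect it to be the proof we actually write. The delicate point is only the time bound $O(mn^{2k})$, which requires an enumeration algorithm whose cost per output cut is $O(m)$.
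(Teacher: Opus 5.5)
The paper gives no proof of this lemma; it simply cites Armon--Zwick \cite{AZ06}. Your structural route is a correct, self-contained proof of everything except the exact running time. Since $\max_i c_i\le \sum_i c_i=:c_\Sigma$ pointwise, we get $\lambda\le\lambda_\Sigma:=\min_R c_\Sigma(\delta(R))$. Hence every optimal $R$ satisfies $c_\Sigma(\delta(R))\le k\lambda\le k\lambda_\Sigma$, and $\lambda>0$ forces $\lambda_\Sigma>0$. Theorem~\ref{thm:kargercuts}, which the paper already uses, then gives the $O(n^{2k})$ count and polynomial-time enumeration, after which you filter by $\max_i c_i(\delta(R))$.

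The citation adds only the specific $O(mn^{2k})$ bound; in your argument that bound is inherited from whichever enumeration routine you plug in. Polynomial time for fixed $k$ is all the paper actually uses. Two details need care:
\begin{itemize}
\item The routine must list cuts of weight at most $k\lambda_\Sigma$, not strictly less, because optimal cuts can sit on the boundary. For example, take edges $ab$ and $bc$ with cost vectors $(1,0)$ and $(0,1)$: then $\{b\}$ is optimal and $c_\Sigma(\delta(\{b\}))=2=k\lambda_\Sigma$.
\item For the optimization claim you must handle $\lambda=0$ separately. This happens exactly when the graph of edges with $c_\Sigma(e)>0$ is disconnected, and a component of that graph is then an optimal $R$.
\end{itemize}

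Your contraction route, as written, never reaches a valid per-step bound, but it was one step from working. Requiring $c_w(E)\ge r\lambda/2$ was more than you need. With uniform weights $w_i=1/k$, every supernode $v$ satisfies $c_w(\delta(v))\ge \lambda/k$, so $c_w(E)\ge r\lambda/(2k)$, while $c_w(\delta(R^*))\le\lambda$. This gives hitting probability at most $2k/r$, which is just Karger's analysis applied to $c_\Sigma$, so the two routes coincide. If you stop at $2k$ supernodes, the counting step should use the fact that at most $2^{2k-1}$ cuts survive a run, not disjointness of survival events.
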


\begin{lemma}\label{lem:verification-to-multiobj-min-cut}
    There exists an algorithm that takes as input an undirected graph $G = (V,E)$ with edge-tiers $\mathcal{T}=(T_1, T_2, \ldots, T_k)$ where $T_1\subseteq T_2 \subseteq \ldots \subseteq T_k = E$ and non-negative integral edge-tier requirements $\mathcal{Q}=(q_1, q_2, \ldots, q_k)$ where $q_1\le q_2\le \ldots\le q_k$,  
    and runs in time $|E||V|^{O(k)}$ to verify whether $G$ is $(\mathcal{T}, \mathcal{Q})$-FGC. 
\end{lemma}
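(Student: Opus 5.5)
The verification reduces to a single multiobjective global min-cut computation, via Lemma~\ref{lem:multiobj-min-cut}, after normalizing each tier count by its requirement. The case where some requirement is zero is handled separately.

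\textbf{Zero requirements.} If some $q_i=0$, then every cut is trivially safe along tier $i$. In that case the algorithm immediately reports that $G$ is $(\mathcal{T},\mathcal{Q})$-FGC. Otherwise $q_i\ge 1$ for all $i\in[k]$.

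\textbf{Cost functions.} For each $i\in[k]$ and $e\in E$, define $c_i(e)=\frac{1}{q_i}$ if $e\in T_i$ and $c_i(e)=0$ otherwise. Then for every cut $R$ we have $c_i(\delta(R))=|\delta(R)\cap T_i|/q_i$. Hence $R$ is safe along tier $i$ if and only if $c_i(\delta(R))\ge 1$.

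\textbf{Equivalence.} A cut $R$ is $(\mathcal{T},\mathcal{Q})$-safe if and only if $\max_{i\in[k]} c_i(\delta(R))\ge 1$. Therefore $G$ is $(\mathcal{T},\mathcal{Q})$-FGC if and only if
\[
\lambda^* := \min_{\emptyset\neq R\subsetneq V}\ \max_{i\in[k]} c_i(\delta(R)) \;\ge\; 1 .
\]
To avoid fractional costs, one can instead scale by $L=\mathrm{lcm}(q_1,\dots,q_k)$, or equivalently use $c_i(e)=\prod_{j\neq i} q_j$ on $T_i$ and compare against $\prod_j q_j$. Either way the costs remain polynomially bounded in bit length, since each $q_i$ is given in the input.

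\textbf{Algorithm.} Run the algorithm of Lemma~\ref{lem:multiobj-min-cut} on $(G,c_1,\dots,c_k)$ to compute $\lambda^*$ in time $O(|E||V|^{2k})$. Answer ``yes'' if and only if $\lambda^*\ge 1$. Only the optimum value is needed, not the enumeration part of the lemma; with $k$ fixed this gives the claimed $|E||V|^{O(k)}$ bound.

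\textbf{Subtleties.} The main point needing care is the range of $R$. The multiobjective min-cut in the paper is stated over $\emptyset\neq R\subseteq V$, which includes $R=V$ with value $0$. The minimum must be restricted to proper subsets, which is the standard meaning of a global min cut, and the algorithm of \cite{AZ06} works over proper cuts. If this is in doubt, one can fix a vertex $s$ and range over $s\in R$, $t\notin R$ for each $t$, at the same polynomial cost. A second minor point is that $G$ may be disconnected. Then some cut has $\max_i c_i(\delta(R))=0<1$, and the algorithm correctly reports infeasibility, provided $\lambda^*=0$ is output correctly. The positivity assumption in Lemma~\ref{lem:multiobj-min-cut} concerns only the enumeration claim, so this case causes no problem.
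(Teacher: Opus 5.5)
Your proposal is correct and matches the paper's proof: handle the zero-requirement case, set $c_i(e)=1/q_i$ on $T_i$ and $0$ elsewhere, observe that $G$ is $(\mathcal{T},\mathcal{Q})$-FGC if and only if the multiobjective global min-cut value is at least $1$, and invoke Lemma~\ref{lem:multiobj-min-cut}. Your extra remarks are valid refinements that the paper leaves implicit, namely restricting the minimum to proper subsets $R\subsetneq V$, noting that a disconnected $G$ gives value $0$, and scaling to integral costs.
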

\begin{proof}
If $q_1=0$, then $G$ has no $(\mathcal{T}, \mathcal{Q})$-unsafe cuts. Henceforth, we assume that $q_1>0$. 
We observe that $G=(V, E)$ has no $(\mathcal{T}, \mathcal{Q})$-unsafe cuts if and only if $\underset{i \in [k]}{\max}\left\{\frac{|\delta(R)\cap T_i|}{q_i}\right\} \geq 1$ for each subset $\emptyset \neq R \subsetneq V$. Thus, $G$ has no $(\mathcal{T}, \mathcal{Q})$-unsafe cuts iff $\underset{\emptyset \neq R \subsetneq V}{\min}\underset{i \in [k]}{\max}\left\{\frac{|\delta(R)\cap T_i|}{q_i}\right\} \geq 1$. 
For each $i \in [k]$, we define $c_{i}(e): E \rightarrow \mathbb{R}_{\geq 0}$ as
\begin{equation*}
    c_i(e) :=
    \begin{cases}
        \frac{1}{q_i} & \text{ if } e\in T_i,\\
        0 &\text{otherwise}.
    \end{cases}
\end{equation*}
 Then, verifying whether $\underset{\emptyset \neq R \subsetneq V}{\min}\underset{i \in [k]}{\max}\left\{\frac{|\delta(R)\cap T_i|}{q_i}\right\} \geq 1$ holds can be done by solving the 
 multi-objective global min-cut problem on graph $G=(V, E)$ with edge costs $c_1, c_2, \ldots, c_k: E\rightarrow \R_{\ge 0}$. Hence, the result follows by Lemma \ref{lem:multiobj-min-cut}.

\end{proof}

\section{Logarithmic Approximation}
In this section, we prove Theorem \ref{thm:log-approx}. For this, we formulate an IP for $k$-tier FGC, show that the LP-relaxation is solvable in polynomial time, and design a randomized rounding algorithm. For an input instance $(G=(V,E),c:E\rightarrow \mathbb{R}_{\geq 0},\mathcal{T}=(T_1,T_2,\ldots, T_k),\mathcal{Q}=(q_1,q_2,\ldots, q_k))$, let $S_i\coloneqq T_i\backslash T_{i-1}$ for every $i\in \{1,2,\dots,k\}$ where $T_0:=\emptyset$. We observe that $S_1,\ldots,S_k$ is a partition of $E$. For each $i\in [k]$ and $\emptyset\neq R\subsetneq V$, we define $\delta_{T_i}(R)\coloneqq \delta(R)\cap T_i$ and $\delta_{S_i}(R)\coloneqq \delta(R)\cap S_i$. In addition, we define for every $a\in \mathbb{R}$, $a^+\coloneqq \max\{a,0\}$. 

We consider the following IP:
\begin{align}\tag{$k$-tier-IP}\label{eq:IP1}
\min & \sum_{e\in E}c_e x_e \notag\\
\sum_{i=1}^k\left(\prod_{j\in [k]\backslash \{i\}} (q_j-|J\cap T_j|)^+ \right)\cdot x(\delta_{T_i}(R)-J)&\geq \prod_{i=1}^k(q_i-|J\cap T_i|)^+ \stepcounter{equation} \tag{\theequation}\label{eq:genconstraint}\\
\forall \text{ pairs } &(J, R) \text{ where }J\subseteq \delta(R) \text{ and } \emptyset\neq R\subsetneq V \notag \\
0 \le x_e&\le 1\ \forall\ e\in E \notag\\
x_e&\in \mathbb{Z}\ \forall\ e\in E. \notag
\end{align}

We show that \ref{eq:IP1} formulates $k$-tier FGC in Lemma \ref{lem:IP-correctness} below. 

\begin{lemma}\label{lem:IP-correctness}
    \ref{eq:IP1} formulates $k$-Tier FGC.
\end{lemma}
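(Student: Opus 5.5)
We need to show that an integral $x\in\{0,1\}^E$ satisfies all constraints \eqref{eq:genconstraint} if and only if $F=\{e: x_e=1\}$ is $(\mathcal{T},\mathcal{Q})$-FGC. Since the objective is $c(F)$ in both problems, the lemma follows.

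\emph{Notation for a pair $(J,R)$.} Write $r_i:=(q_i-|J\cap T_i|)^+$ for each $i\in[k]$. The constraint then reads
\[
\sum_i \Bigl(\prod_{j\neq i} r_j\Bigr)\, x(\delta_{T_i}(R)-J)\ \ge\ \prod_i r_i .
\]

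\textbf{Direction 1: every feasible $F$ satisfies all constraints.} Fix a pair $(J,R)$.
\begin{itemize}
\item If some $r_j=0$, the right-hand side is $0$ and the left-hand side is nonnegative, so the constraint holds.
\item Otherwise all $r_j>0$. Since $F$ is FGC, $R$ is safe, so some $i$ satisfies $|\delta_{T_i}(R)\cap F|\ge q_i$.
\item Hence $x(\delta_{T_i}(R)-J)\ge q_i-|J\cap T_i|=r_i$, because removing $J$ deletes at most $|J\cap T_i|$ edges of $\delta_{T_i}(R)$.
\item The $i$-th term alone is then at least $\prod_{j\neq i}r_j\cdot r_i=\prod_j r_j$. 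The other terms are nonnegative, so the constraint holds.
\end{itemize}

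\textbf{Direction 2: an infeasible $F$ violates some constraint.} Suppose $R$ is unsafe for $F$, so $|\delta_{T_i}(R)\cap F|<q_i$ for every $i$. The key choice is $J:=\delta(R)\cap F$.
\begin{itemize}
\item Then $\delta_{T_i}(R)-J$ contains no edge of $F$, so $x(\delta_{T_i}(R)-J)=0$ for all $i$, and the left-hand side is $0$.
\item Also $|J\cap T_i|=|\delta_{T_i}(R)\cap F|<q_i$, so every $r_i\ge 1$ and the right-hand side is at least $1>0$.
\item So the constraint for $(J,R)$ is violated, with $J\subseteq\delta(R)$ as required.
\end{itemize}

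\textbf{Main obstacle.} The only subtlety is the degenerate case where some $r_j=0$ in Direction 1, which kills the right-hand side. It is handled by nonnegativity of the left-hand side, using that $x\ge 0$ and the coefficients are products of nonnegative numbers. Everything else is a direct check.
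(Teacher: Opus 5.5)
Your proof is correct and follows the paper's argument essentially step for step. Nonnegativity handles the case where some $(q_j-|J\cap T_j|)^+$ vanishes, a safe tier makes its term alone dominate the right-hand side, and the choice $J:=\delta(R)\cap F$ for an unsafe $R$ yields a violated constraint; the only cosmetic difference is that the paper divides by $\prod_j (q_j-|J\cap T_j|)$ and multiplies back, whereas you bound the $i$-th term directly.
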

\begin{proof}
We first show that the indicator vector of every feasible solution $F\subseteq E$ satisfies constraint \eqref{eq:genconstraint} for $(J, R)$ for every $\emptyset\neq R\subsetneq V$ and $J\subseteq \delta(R)$. Let $F\subseteq E$ such that $(V, F)$ is $(\mathcal{T}, \mathcal{Q})$-FGC. Let $\emptyset\neq R\subsetneq V$ and $J\subseteq \delta(R)$. If $q_i-|J\cap T_i|\le 0$ for some $i\in [k]$, then RHS of \eqref{eq:genconstraint} is 0 while LHS of \eqref{eq:genconstraint} is non-negative owing to the non-negativity constraints on $x$ and therefore, the constraint holds. Thus, we may assume that $q_i-|J\cap T_i|> 0$ for every $i\in [k]$. Since $(V,F)$ is $(\mathcal{T}, \mathcal{Q})$-FGC, there exists $t\in [k]$ such that $|\delta_{T_t}(R)\cap F|\geq q_{t}$. Consequently, we have that $|(\delta_{T_t}(R)\cap F)-J|\geq q_{t}-|J\cap T_t|$. Hence, $\frac{|(\delta_{T_t}(R)\cap F)-J|}{q_{t}-|J\cap T_j|}\ge 1$ and therefore, 
  \begin{equation}
      \sum_{i=1}^k \frac{|(\delta_{T_i}(R)\cap F)- J)|}{q_i-|J\cap T_i|} \ge \frac{|(\delta_{T_{t}}(R)\cap F)-J|}{q_{t}-|J\cap T_t|}\geq 1 \label{eq:exactProofInequality}.
  \end{equation}
  Multiplying both sides of \eqref{eq:exactProofInequality} by $\prod_{i=1}^k(q_i-|J\cap T_i|)$, we obtain 
  \begin{equation*}
      \sum_{i=1}^k \left(\prod_{j\in [k]\setminus\{i\}}(q_j-|J\cap T_j|)\right)\cdot |(\delta_{T_i}(R)\cap F)- J)| \ge \prod_{j\in [k]}(q_j-|J\cap T_j|),
  \end{equation*}
  and hence, the indicator vector of $F$ satisfies constraint \eqref{eq:genconstraint}. 

  Next we show that if $F\subseteq E$ is such that $(V, F)$ is not $(\mathcal{T}, \mathcal{Q})$-FGC, then the indicator vector $x$ of $F$ violates constraint \eqref{eq:genconstraint} for some tuple $(J, R)$ where $\emptyset\neq R\subsetneq V$ and $J\subseteq \delta(R)$. Since $(V, F)$ is not $(\mathcal{T}, \mathcal{Q})$-FGC, there exists $\emptyset\neq R\subsetneq V$ such that $|\delta_{T_i}(R)\cap F|<q_i$ for all $i\in [k]$. Let $J:=\delta(R)\cap F$. For this choice of $x$ and $(J,R)$, the LHS of \eqref{eq:genconstraint} is 0 while the RHS of \eqref{eq:genconstraint} is positive, showing violation of constraint \eqref{eq:genconstraint}.
\end{proof}

We show that the LP-relaxation of \ref{eq:IP1} is solvable in polynomial time in Section \ref{sec:LP-solvability}---see Lemma \ref{thm:LPpolysolve}. We design a rounding algorithm and analyze its approximation factor in Section \ref{sec:rounding}---see Lemma \ref{lem:rounding}. Theorem \ref{thm:log-approx} follows by Lemmas \ref{lem:IP-correctness}, \ref{thm:LPpolysolve}, and \ref{lem:rounding}. We discuss the integrality gap of our LP-relaxation and compare it to Ibrahimpur-V\'{e}gh's LP-relaxation for $(p, q)$-FGC in Section \ref{sec:LP-strength}. We need the following result on the number of approximate min-cuts. 

\begin{theorem}[\cite{Kar93}]\label{thm:kargercuts}
     Let $G=(V,E)$ with non-negative edge capacities $c: E\rightarrow \R_{\ge 0}$ and let $\lambda:=\min\{c(\delta(R)): \emptyset\neq R \subsetneq V\}$. For every $\alpha\geq 1$, we have that 
     \[
     \left|\left\{R: \emptyset\neq R\subsetneq V, c(\delta(R))\le \alpha \lambda \right\}\right| = O(|V|^{2\alpha}). 
     \]
     Moreover, $\left\{R: \emptyset\neq R\subsetneq V, c(\delta(R))\le \alpha \lambda \right\}$ can be enumerated in polynomial time.
\end{theorem}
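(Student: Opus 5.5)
This is the classical bound of Karger \cite{Kar93}, and I would reprove it with the random contraction argument. Throughout I assume $\lambda>0$. When $\lambda=0$ the graph of positive-capacity edges is disconnected, and the count can be exponential (for example, $n$ isolated vertices have $2^n-2$ cuts of capacity $0$). So the statement must be read with this implicit hypothesis, as in Karger's paper. Fix $\alpha\ge 1$ and let $t:=\lceil 2\alpha\rceil$. Consider the randomized procedure that repeatedly picks an edge $e$ with probability $c_e/c(E)$ in the current multigraph and contracts it, removing self-loops. It stops when $t$ super-vertices remain (if $n\le t$ we simply list all cuts). It then outputs a uniformly random one of the $2^{t-1}-1$ nontrivial cuts of the contracted graph.

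The key step is a lower bound on the probability that a fixed cut $R$ with $c(\delta(R))\le\alpha\lambda$ is the output. Contraction never decreases the min-cut value, since every cut of a contracted graph is a cut of $G$. Hence, when $r$ super-vertices remain, every super-vertex has weighted degree at least $\lambda$ and the total weight is at least $r\lambda/2$. Suppose no edge of $\delta(R)$ has been contracted so far. Then the probability of contracting one in this step is at most $\alpha\lambda/(r\lambda/2)=2\alpha/r$. Multiplying over $r=n,n-1,\dots,t+1$, the cut survives with probability at least
\[
\prod_{r=t+1}^{n}\Bigl(1-\frac{2\alpha}{r}\Bigr)\;\ge\; c_\alpha\, n^{-2\alpha}
\]
for a constant $c_\alpha>0$ depending only on $\alpha$. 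For integral $2\alpha$ this product equals exactly $1/\binom{n}{2\alpha}$. In general I would bound it by exponentiating $\sum_r\ln(1-2\alpha/r)\ge -2\alpha\ln n - O_\alpha(1)$, using $r\ge t+1>2\alpha$ so every factor is positive. This estimate is the only delicate computation, and it is the step I expect to need the most care: for non-integral $\alpha$ one must check that the factors near $r=t+1$ stay bounded away from $0$ by a constant depending only on $\alpha$. If $R$ survives, it is output with probability $1/(2^{t-1}-1)\ge 2^{-2\alpha-1}$. So each such $R$ is output with probability at least $c'_\alpha n^{-2\alpha}$.

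The counting step is then immediate. A single run outputs exactly one cut, so the events "$R$ is output" over distinct $R$ are disjoint, and their probabilities sum to at most $1$. This gives at most $n^{2\alpha}/c'_\alpha=O(|V|^{2\alpha})$ such cuts, treating $\alpha$ as fixed, which is how the theorem is used. For enumeration, I would run the procedure $N=O(n^{2\alpha}\log n)/c'_\alpha$ times and keep every output cut whose capacity is at most $\alpha\lambda$; here $\lambda$ is computed exactly by any polynomial-time global min-cut algorithm. A fixed cut is missed with probability at most $(1-c'_\alpha n^{-2\alpha})^N\le n^{-3\alpha}$. A union bound over the $O(n^{2\alpha})$ cuts then shows all of them are found with high probability. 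If a deterministic guarantee is required, I would instead cite the deterministic enumeration of approximate min-cuts by Nagamochi, Nishimura and Ibaraki, which runs in time polynomial in $n^{2\alpha}$.
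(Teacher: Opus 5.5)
The paper gives no proof of this statement and only cites Karger; your random-contraction argument is the standard proof from that source and is correct. Your observation that the hypothesis $\lambda>0$ is needed is also valid, since for $\lambda=0$ (e.g.\ all capacities zero) there are $2^{|V|}-2$ qualifying sets.

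One side remark should be corrected: the theorem is not used only for fixed $\alpha$. Lemma~\ref{lem:high_prob} applies it for unboundedly many values $\alpha=\ell+1$ and sums the resulting estimates over $\ell$, so the hidden constant must be uniform in $\alpha$. Your argument gives this once constants are tracked: for integral $2\alpha$ the survival probability is $1/\binom{n}{2\alpha}\ge(2\alpha)!\,n^{-2\alpha}$, which absorbs the factor $2^{2\alpha-1}-1$ and yields at most $n^{2\alpha}$ cuts $\{R,V\setminus R\}$.
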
 

For the rest of this section, we fix the input graph $G=(V, E)$, tiers $\mathcal{T}=(T_1, T_2, \ldots, T_k)$ where $T_1\subseteq T_2\subseteq \ldots\subseteq T_k=E$, non-negative integral tier requirements $\mathcal{Q}=(q_1, q_2, \ldots, q_k)$ where  $q_1\le q_2\le \ldots\le q_k$. 

\subsection{Efficient solvability of the LP}\label{sec:LP-solvability}
We assume that $q_i\le |E|$ for all $i\in [k]$ since the problem is infeasible otherwise. 
The following is the main result of this section.
\begin{lemma}\label{thm:LPpolysolve}
    For every fixed constant $k$, the LP relaxation of \ref{eq:IP1} can be solved in polynomial time.
\end{lemma}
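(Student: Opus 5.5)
We solve the LP relaxation of \ref{eq:IP1} with the ellipsoid method. This reduces the task to a polynomial-time separation oracle: given $x\in[0,1]^E$, either certify that \eqref{eq:genconstraint} holds for all pairs $(J,R)$, or output a violated pair. The box constraints are checked directly. The oracle has two stages. First we reduce the choice of $J$ to polynomially many \emph{types}. Then, for each type, we find the cuts $R$ that could be violated using Theorem \ref{thm:kargercuts}.

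\textbf{Reducing the choice of $J$.} Fix $R$. A constraint with some $q_i-|J\cap T_i|\le 0$ has right-hand side $0$ and is trivially satisfied, so assume $|J\cap T_i|<q_i$ for all $i$. Write $a_j:=|J\cap S_j|$, so that $|J\cap T_i|=\sum_{j\le i}a_j$. For a fixed vector $a=(a_1,\dots,a_k)$, the right-hand side and the coefficients $\beta_i(a):=\prod_{j\ne i}(q_j-|J\cap T_j|)^+\ge 0$ are fixed. The left-hand side equals $\sum_i \beta_i(a)\,x(\delta_{T_i}(R)-J)$, and each edge $e\in S_j\cap J$ removes $x_e\sum_{i\ge j}\beta_i(a)$ from it. Hence the most violated $J$ of type $a$ takes the $a_j$ edges of $\delta_{S_j}(R)$ with largest $x$-value, for each $j$. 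Since $a_j<q_k\le|E|$, there are at most $|E|^k$ types, which is polynomial for fixed $k$.

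\textbf{Finding candidate cuts $R$.} The difficulty is that $R$ ranges over exponentially many sets. A violated pair satisfies
\[
\sum_i\beta_i(a)\,x(\delta_{T_i}(R)-J)<\prod_i(q_i-|J\cap T_i|).
\]
Dropping the removed edges only loses at most $|J|\cdot\max_i\beta_i\le kq_k\cdot\max_i\beta_i$, but this bound is not clean. A better route is to define the capacity $w_i(e):=x_e$ if $e\in T_i$ and use the weighted combination $c_a(e):=\sum_i\beta_i(a)\,x_e\,[e\in T_i]$, so that $c_a(\delta(R))\le \prod_i(q_i-\dots)+\sum_{e\in J}c_a(e)$. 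Each term of this bound is at most
\[
\Bigl(\sum_i\beta_i\Bigr)\cdot\Bigl(\max_i q_i\Bigr)\le k\,q_k\,\max_i\beta_i.
\]

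To obtain an $\alpha$-approximate min-cut bound, we first check that $x$ satisfies the constraints with $J=\emptyset$, i.e., $\sum_i\beta_i(0)\,x(\delta_{T_i}(R))\ge\prod q_i$ for all $R$. This itself is a single min-cut computation with capacity $c_0$, and a violating $R$ is returned directly. Then $c_a$ has min cut at least roughly $\prod_i(q_i-|J\cap T_i|)$ up to polynomial factors. This is the main obstacle: $c_a$ and $c_0$ differ in their coefficients, and the products can vary a lot with $a$.

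My fallback is a constant-factor argument in the spirit of Ibrahimpur–Végh. If a pair $(J,R)$ is violated, then some tier $i$ has $x(\delta_{T_i}(R))\le q_i+|J|\le 2q_k$ in the unnormalized sense. Moreover, validity of the $J=\emptyset$ constraints for each single-tier weighting gives $\min_R x(\delta(R))\ge$ some lower bound $\lambda$. For $k$ fixed, the ratio between $c_a(\delta(R))$ on a violated cut and $\lambda_a$ is bounded by a constant $\alpha=O(k)$. To make this work, I would require the min cut of $x$ to be at least $1$ (enforced via the constraint with $J$ consisting of all but one edge in each tier, which reduces to requiring $x(\delta(R))\ge 1$). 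The violated-cut capacity, measured in $x(\delta(R))$, is then at most $kq_k$ times larger. This gives $\alpha$ polynomial in $q_k$, which might not be constant.

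If this fails, I would instead enumerate via Lemma \ref{lem:multiobj-min-cut}. For each type $a$, the violated cuts are exactly those with $\max_i$-scaled tier capacities small. The plan is to run the multiobjective min-cut on the $k$ capacities $x|_{T_i}$, suitably rescaled by $(q_i-|J\cap T_i|)$, and enumerate near-optimal cuts in $O(mn^{2k})$ time. For each enumerated $R$ we then compute the best $J$ of type $a$ explicitly, as described above.
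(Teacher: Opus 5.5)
Your first stage is fine: reducing $J$ to polynomially many types $(a_1,\dots,a_k)$ and taking the $a_j$ largest-$x$ edges of $\delta_{S_j}(R)$ is exactly the paper's per-cut subroutine. The gap is in the second stage, which is where the real difficulty lies. None of your routes produces a polynomial-size list of candidate cuts $R$.
\begin{itemize}
\item The type-dependent capacities $c_a$ stall on the obstacle you name yourself.
\item The fallback through the unweighted $x(\delta(R))$ gives, as you concede, an approximation ratio that grows with $q_k/q_1$. Theorem~\ref{thm:kargercuts} then gives $O(n^{2\alpha})$ candidates with non-constant $\alpha$, which is not polynomial.
\item The multiobjective route needs enumeration of \emph{approximately} optimal multiobjective cuts. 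Lemma~\ref{lem:multiobj-min-cut} does not supply this; it lists only exact minimizers. Your claim that violated cuts are ``exactly'' those with small scaled capacities is also unproven: violation is a sum condition involving the $R$-dependent set $J$.
\item Your auxiliary claim that a $J$ with all but one edge per tier yields $x(\delta(R))\ge 1$ is unsupported. Such a $J$ typically has $|J\cap T_i|\ge q_i$, and then the constraint is vacuous.
\end{itemize}

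The missing idea is a $J$-independent necessary condition for violation, expressed in one fixed capacity. Use your $c_0$, i.e.\ $u_x(\delta(R))=\sum_i Q_i\,x(\delta_{T_i}(R))$ with $Q_i=\prod_{j\ne i}q_j$.
\begin{itemize}
\item Suppose some tier $m$ has $x(\delta_{T_m}(R))\ge q_m$. Since $x\le 1$ and $J\subseteq\delta(R)$, you get $x(\delta_{T_m}(R)\setminus J)\ge q_m-|J\cap T_m|$. So the $m$-th term alone dominates the right-hand side of \eqref{eq:genconstraint}, for every $J$.
\item Hence a violated pair forces $x(\delta_{T_i}(R))<q_i$ for \emph{every} $i$, not merely for some $i$ as in your fallback. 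It follows that $u_x(\delta(R))<\sum_i Q_iq_i=k\prod_i q_i$.
\item Do one global min-cut computation on $(G,u_x)$. If the minimum is below $\prod_i q_i$, return $(\emptyset,R^*)$.
\item Otherwise every candidate is a $k$-approximate min cut of $(G,u_x)$. Theorem~\ref{thm:kargercuts} enumerates these $O(n^{2k})$ cuts, and your type enumeration handles each one.
\end{itemize}
The $1/q_i$ weighting built into $u_x$ is exactly what makes the ratio $k$ rather than $q_k/q_1$. No type-dependent capacities are needed.
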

We note that the number of constraints in the LP relaxation of \ref{eq:IP1} is exponential in  $|V|$. We will show that there exists an efficient separation oracle for the LP relaxation of \ref{eq:IP1} and use the Ellipsoid method to solve the LP. 

\begin{definition}[Capacity Function]
Let $x\in [0,1]^E$. 
    Let $Q_i:=\prod_{j\in [k]\setminus \{i\}}q_j$ for every $i\in [k]$. 
    For each $i\in [k]$ and each $e\in S_i$, let 
\begin{equation*}
    u_{x}(e)\coloneqq \left(\sum_{j=i}^kQ_j\right)\cdot x_e. 
\end{equation*}
\end{definition}

We first observe the following relationship for the capacity of a cut $\delta(R)$, where $R$ is a nonempty proper subset of vertices.
\begin{proposition}\label{prop:capacitycut}
Let $x\in [0,1]^E$. Then, for
every $\emptyset\neq R\subsetneq V$, we have that
    \begin{equation}
        u_{x}(\delta(R))=\sum_{i=1}^kQ_i\cdot x(\delta_{T_i}(R))\label{eq:capacityfunctionFGC}.
    \end{equation}
\end{proposition}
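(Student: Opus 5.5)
The identity is a double-counting argument: expand both sides as sums over the edges of $\delta(R)$ and compare the coefficient of each $x_e$. By definition of $u_x$ and linearity,
\[
u_x(\delta(R))=\sum_{e\in\delta(R)}u_x(e)=\sum_{i=1}^k\sum_{e\in\delta_{S_i}(R)}\Bigl(\sum_{j=i}^k Q_j\Bigr)x_e .
\]
Here we use that $S_1,\ldots,S_k$ partition $E$, so every edge of $\delta(R)$ lies in exactly one $\delta_{S_i}(R)$.

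For the right-hand side, the nestedness $T_1\subseteq\cdots\subseteq T_k$ gives $T_j=S_1\cup\cdots\cup S_j$ as a disjoint union. Hence $x(\delta_{T_j}(R))=\sum_{i=1}^{j}x(\delta_{S_i}(R))$, and
\[
\sum_{j=1}^kQ_j\, x(\delta_{T_j}(R))=\sum_{j=1}^k\sum_{i=1}^{j}Q_j\, x(\delta_{S_i}(R)).
\]
Swapping the order of summation over the triangle $\{(i,j):1\le i\le j\le k\}$ turns this into
\[
\sum_{i=1}^k\Bigl(\sum_{j=i}^kQ_j\Bigr)x(\delta_{S_i}(R)),
\]
which is exactly the expansion of $u_x(\delta(R))$ above.

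Equivalently, an edge $e\in S_i$ lies in $T_j$ precisely when $j\ge i$. So on the right-hand side $x_e$ is counted with coefficient $\sum_{j\ge i}Q_j$, matching $u_x(e)/x_e$. There is no real obstacle here; the only point needing care is the interchange of summation, i.e.\ correctly identifying for each edge which tiers contain it.
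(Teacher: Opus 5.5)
Your proposal is correct and matches the paper's proof. Both arguments expand $u_x(\delta(R))$ over the partition $S_1,\ldots,S_k$, use $x(\delta_{T_j}(R))=\sum_{i\le j}x(\delta_{S_i}(R))$, and swap the order of summation over $\{(i,j):1\le i\le j\le k\}$; the only difference is that you run the chain of equalities from the right-hand side, while the paper starts from $u_x(\delta(R))$.
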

\begin{proof}
    We have that 
    \begin{align*}
        u_x(\delta(R))&=\sum_{i=1}^k\left(\sum_{j=i}^kQ_j\right)\cdot x(\delta_{S_i}(R))\\
        &=\sum_{j=1}^k\sum_{i=1}^j\left(Q_j\cdot x(\delta_{S_i}(R))\right)\quad\quad
        &\text{(changing order of summation)}\\
        &=\sum_{j=1}^kQ_j\sum_{i=1}^jx(\delta_{S_i}(R))\\
        &=\sum_{j=1}^kQ_j\cdot x(\delta_{T_j}(R)), 
    \end{align*}
    where the last equation is by definition of $S_i$ for every $i\in [k]$. 
\end{proof}

As a first step towards separating over the entire family of constraints \eqref{eq:genconstraint} over all $(J\subseteq \delta(R), R\subseteq 2^V-\{\emptyset, V\})$ tuples, we solve the separation problem over the family of constraints \eqref{eq:genconstraint} for a fixed $R\in 2^V-\{\emptyset,V\}$. The following lemma shows that for a given $\emptyset\neq R\subseteq V$ and $x\in [0,1]^E$, there exists a polynomial time algorithm to verify whether $x$ satisfies constraint \eqref{eq:genconstraint} for $(J, R)$ for every $J\subseteq \delta(R)$ and if not, then return a set $J\subseteq \delta(R)$ such that \eqref{eq:genconstraint} is violated for $(J, R)$. The run-time of the algorithm in the following lemma is polynomial for constant $k$.

\begin{lemma}\label{lem:genknap} 
There exists an algorithm that takes $x\in [0,1]^E$ and $\emptyset\neq R\subsetneq V$ as input and runs in time $O(|E|^{k+2})$ to verify if there exists $J\subseteq \delta(R)$ such that 
\begin{equation}\label{eq:knapsackmin}
    \sum_{i=1}^k\left(\prod_{j\in [k]\backslash \{i\}} (q_j-|J\cap T_j|)^+ \right) \cdot x(\delta_{T_i}(R)-J)-\prod_{i=1}^k\left(q_i-|J\cap T_i|\right)^+ <0.
\end{equation}
and if so, then return such a $J$.

\end{lemma}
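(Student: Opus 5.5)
Fix $x$ and $R$. The left side of \eqref{eq:knapsackmin} depends on $J$ only through the count vector $(a_1,\ldots,a_k)$ with $a_i:=|J\cap S_i|$ and through $x(\delta_{T_i}(R)-J)$. Writing $b_i:=|J\cap T_i|=\sum_{j\le i}a_j$, the products $\prod_{j\ne i}(q_j-b_j)^+$ and $\prod_i(q_i-b_i)^+$ are fixed once $(a_1,\ldots,a_k)$ is fixed. So I enumerate all vectors $(a_1,\ldots,a_k)$ with $0\le a_i\le |\delta_{S_i}(R)|$. There are at most $O(|E|^k)$ of them, since each $a_i\le |E|$. If some $a_i$ exceeds $q_i$, the sets are still legal but then $b_i\ge q_i$ and the right side is $0$, so no violation occurs; it is fine either to skip such vectors or keep them. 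For each vector, I look for the $J$ with those counts that minimizes the left side.

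For a fixed count vector, set $P_i:=\prod_{j\ne i}(q_j-b_j)^+\ge 0$. The first term equals $\sum_i P_i\,x(\delta_{T_i}(R)-J)$. As in Proposition \ref{prop:capacitycut}, exchange the order of summation to rewrite it as $\sum_{\ell}\bigl(\sum_{i\ge \ell}P_i\bigr)x(\delta_{S_\ell}(R)-J)$. The weights $W_\ell:=\sum_{i\ge\ell}P_i$ are nonnegative constants, and the parts $S_\ell$ are disjoint. Minimizing the expression therefore separates across $\ell$: within $\delta_{S_\ell}(R)$ we must remove exactly $a_\ell$ edges so as to minimize $W_\ell\,x(\delta_{S_\ell}(R)-J)$. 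Since $W_\ell\ge0$, the optimal choice is to place into $J$ the $a_\ell$ edges of $\delta_{S_\ell}(R)$ with largest $x_e$. I would pre-sort each $\delta_{S_\ell}(R)$ by $x$ and take prefix sums. Each candidate is then evaluated in $O(k^2)$ or $O(|E|)$ time, well within the $O(|E|^{k+2})$ budget even if evaluation is done naively in $O(|E|^2)$ per vector.

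The algorithm reports a violation exactly when some count vector's greedy $J$ makes the expression negative, and it returns that $J$. Correctness follows because every $J\subseteq\delta(R)$ has some count vector, and for that vector the greedy $J$ achieves a value no larger. The only subtle point, and the main thing to check, is that the coefficients $P_i$ depend on $J$ solely through the counts. This holds because the tiers are unions of the $S_\ell$, so fixing the counts freezes all products and leaves an objective that is linear with nonnegative coefficients.
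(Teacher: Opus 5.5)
Your proposal is correct and follows essentially the same route as the paper. You enumerate the count vectors $(|J\cap S_1|,\ldots,|J\cap S_k|)$, observe that the truncated products are then fixed, swap the summation to get nonnegative weights $\sum_{i\ge \ell}P_i$ on each $x(\delta_{S_\ell}(R)-J)$, and greedily put the $a_\ell$ largest-$x$ edges of each $\delta_{S_\ell}(R)$ into $J$. The only difference is cosmetic: you bound the enumeration by $\prod_i(|\delta_{S_i}(R)|+1)=O(|E|^k)$ and keep vectors with some $|J\cap T_i|\ge q_i$ (harmless because of the $(\cdot)^+$), whereas the paper discards those vectors and bounds their number by $\prod_i q_i\le |E|^k$ using the assumption $q_i\le |E|$.
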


\begin{proof}
Suppose $J\subseteq \delta(R)$ is such that $q_i\le |J\cap T_i|$ for some $i\in [k]$. Then the LHS of (\ref{eq:knapsackmin}) is $\left(\prod_{j\in [k]\backslash \{i\}} (q_j-|J\cap T_j|)^+ \right)x(\delta_{T_i}(R)-J) \ge 0$ by non-negativity of $x$. Thus, it suffices to solve the following optimization problem: 
    \begin{equation}\label{eq:knapsackmin-opt}
        \begin{split}
        &\arg\min_{J\subseteq \delta(R): |J\cap \delta_{T_i}(R)|<q_i\forall i\in [k]}\\
        &\quad \left\{\sum_{i=1}^k\left(\prod_{j\in [k]\backslash \{i\}} (q_j-|J\cap T_j|)^+ \right) \cdot x(\delta_{T_i}(R)-J)-\prod_{i=1}^k\left(q_i-|J\cap T_i|\right)^+\right\}.
        \end{split}
    \end{equation}

We note that for a subset $J\subseteq \delta(R)$ and $i\in [k]$, we have $|J\cap T_i|=\sum_{m=1}^i|J\cap S_m|$. 
We have that a set $J^*\subseteq \delta(R)$ is an optimum solution to the optimization problem in \eqref{eq:knapsackmin-opt} if and only if $J^*$ is a minimizer of the following problem:
\begin{equation}
\begin{split}
    &\min_{\substack{a_i\in [\min\{q_i,|\delta_{S_i}(R)|\}] \forall i\in [k]: \\ \sum_{i=1}^j a_i < q_j\ \forall j\in [k]}}\ \ 
     \min_{\substack{J\subseteq \delta(R): \\ |J\cap S_i|=a_i \forall i\in [k]}} \\
    &\quad \Bigg\{ \sum_{i=1}^k \left( \prod_{j\in [k]\backslash \{i\}} \left( q_j-\sum_{m=1}^j a_m \right) \right)\cdot x(\delta_{T_i}(R)-J)
    - \prod_{i=1}^k \left( q_i-\sum_{m=1}^j a_m \right) \Bigg\}.
\end{split}
\end{equation}
The number of possible choices for $(a_1,\dots,a_k)$ with $a_i\in [\min\{q_i,|\delta_{S_i}(R)|\}]$ for each $i\in [k]$ so that $\sum_{i=1}^j a_i< q_j$ for all $j\in [k]$ is at most $\prod_{i=1}^kq_i$.

After fixing $(a_1,\cdots,a_k$), $\prod_{i=1}^{k} \left(q_i - \sum_{m=1}^{j} a_{m}\right)$ is a constant. Thus, for each choice of $(a_1,\cdots,a_k)$, it suffices to solve the following optimization problem.

\begin{equation}\label{eq:equivmin}
    \arg \min_{\substack{J\subseteq \delta(R): \\|J\cap S_i|=a_i\\\forall i\in [k]}}\sum_{i=1}^k\left(\prod_{j\in [k]\backslash \{i\}} (q_j-\sum_{m=1}^ja_m) \right)\cdot\left(\sum_{e\in \delta_{T_i}(R)\backslash J}x_e\right).
\end{equation}
We will now show that \eqref{eq:equivmin} is solvable in time $O(|E|^2)$ for every $(a_1,\dots,a_k)$. Fix a choice of $(a_1,\dots,a_k)$ and denote $\alpha_i\coloneqq \prod_{j\in [k]\backslash \{i\}}(q_j-\sum_{m=1}^ja_m)$. Then, \eqref{eq:equivmin} is equivalent to 
\begin{align*}
    \min_{\substack{J\subseteq \delta(R): \\|J\cap S_i|=a_i\\\forall i\in [k]}}\sum_{i=1}^k\alpha_i\cdot\left(\sum_{e\in \delta_{T_i}(R)\backslash J}x_e\right)
&= \min_{\substack{J\subseteq \delta(R): \\|J\cap S_i|=a_i\\\forall i\in [k]}}\sum_{i=1}^k\alpha_i\cdot\left(\sum_{m=1}^i\sum_{e\in \delta_{S_m}(R)\backslash J}x_e\right)\\
&= \min_{\substack{J\subseteq \delta(R): \\|J\cap S_i|=a_i\\\forall i\in [k]}}\sum_{i=1}^k\sum_{m=1}^i\left(\alpha_i\cdot\sum_{e\in \delta_{S_m}(R)\backslash J}x_e\right)\\
&=\min_{\substack{J\subseteq \delta(R): \\|J\cap S_i|=a_i\\\forall i\in [k]}}\sum_{i=1}^k\left(\left(\sum_{m=i}^k\alpha_m\right)\cdot\left(\sum_{e\in \delta_{S_i}(R)\backslash J}x_e\right)\right)\\
&=\sum_{i=1}^k\left(\left(\sum_{m=i}^k\alpha_m\right)\cdot\min_{\substack{J\subseteq \delta(R): \\|J\cap S_i|=a_i}}\left(\sum_{e\in \delta_{S_i}(R)\backslash J}x_e\right)\right)\\
&=\sum_{i=1}^k\left(\left(\sum_{m=i}^k\alpha_m\right)\cdot\min_{\substack{J_i\subseteq \delta_{S_i}(R): \\|J_i|=a_i}}\left(\sum_{e\in (\delta(R)\cap S_i)\backslash J_i}x_e\right)\right).
\end{align*}
The third equation is by swapping the order of sums from the second equation. The last two equations are because $S_i\cap S_j=\emptyset$ for every distinct $i,j\in [k]$. Let $i\in [k]$. We now focus on the following problem:
\begin{equation}\label{eq:inner-min}
\arg\min_{\substack{J_i\subseteq \delta_{S_i}(R): \\|J_i|=a_i}}\left(\sum_{e\in \delta_{S_i}(R)\backslash J_i}x_e\right).
\end{equation}
We observe that the optimum is achieved by the subset $J$ that consists of the $a_i$ edges in $\delta_{S_i}(R)$ that have the largest $x_e$ value. Hence, we can sort all edges in $\delta_{S_i}(R)$ in decreasing order and find the optimum set $J_i$. Therefore, \eqref{eq:inner-min} can be solved in $O(|\delta_{S_i}(R)|\log |\delta_{S_i}(R)|)$ time for each $i\in [k]$. Because $S_i\cap S_j=\emptyset$ for every distinct $i,j\in [k]$, \eqref{eq:equivmin} can be solved in $O(\sum_{i=1}^k(|\delta_{S_i}(R)|\log|\delta_{S_i}(R)|))=O(|\delta(R)|\log|\delta(R)|)$ time. Consequently, \eqref{eq:knapsackmin} can be solved in $O((\prod_{i=1}^k q_i)|\delta(R)|\log{|\delta(R)|})$ time. Since $q_i\leq |E|$ for every $i\in [k]$ and $|\delta(R)|\leq |E|$, the run-time is $O(|E|^{k+2})$.
\end{proof}

Lemma \ref{lem:genknap} by itself does not enable efficient separation over the family of constraints \eqref{eq:genconstraint} since the number of possible subsets $R$ is exponential in $|V|$. To address this issue, we will rely on the global min-cut value of $(G,u_x)$ to determine whether a given point $x$ satisfies all constraints of the LP-relaxation and if not, then find a violated constraint. The following lemma establishes two ranges of the global min-cut value in $(G,u_x)$ where the separation problem is easy: the first range leads to a constraint violated by $x$, and the second range certifies that $x$ satisfies all constraints of the LP-relaxation of \ref{eq:IP1}.
\begin{lemma}\label{lem:gencapacitated}
    Let $x\in [0,1]^E$. 
    We have the following:
    \begin{enumerate}
        \item Let $\emptyset \neq R\subsetneq V$ such that $u_x(\delta(R))<\prod_{i=1}^kq_i$. Then, $x$ violates constraint \eqref{eq:genconstraint} for $(J=\emptyset, R)$.
        \item Let $\emptyset \neq R\subsetneq V$ such that $u_{x}(\delta(R))\geq k\prod_{i=1}^kq_i$. Then, $x$ satisfies constraint \eqref{eq:genconstraint} for $(J, R)$ for every $J\subseteq \delta(R)$.
    \end{enumerate}
\end{lemma}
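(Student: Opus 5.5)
Both parts reduce to Proposition \ref{prop:capacitycut}, which gives $u_x(\delta(R))=\sum_{i=1}^k Q_i\, x(\delta_{T_i}(R))$ with $Q_i=\prod_{j\neq i}q_j$.

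\emph{Part 1.} Specialize constraint \eqref{eq:genconstraint} to $J=\emptyset$. Then $|J\cap T_j|=0$ for all $j$, so $(q_j-|J\cap T_j|)^+=q_j$ by non-negativity of $q_j$. The left-hand side becomes $\sum_i Q_i\,x(\delta_{T_i}(R))=u_x(\delta(R))$, and the right-hand side becomes $\prod_i q_i$. The hypothesis $u_x(\delta(R))<\prod_i q_i$ is therefore exactly the violation of \eqref{eq:genconstraint} for $(\emptyset,R)$.

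\emph{Part 2.} Fix $J\subseteq\delta(R)$ and set $p_j:=(q_j-|J\cap T_j|)^+$. If some $p_j=0$, the right-hand side vanishes and the left-hand side is non-negative, so the constraint holds. Assume now $p_j\ge 1$ for all $j$, i.e.\ $|J\cap T_j|<q_j$. Since $u_x(\delta(R))\ge k\prod_i q_i$, dividing by $\prod_i q_i$ gives $\sum_i x(\delta_{T_i}(R))/q_i\ge k$. Hence there is an index $t$ with $x(\delta_{T_t}(R))\ge q_t$; otherwise every term is $<1$ and the sum is $<k$. (Note $q_t\ge 1$ is guaranteed because $p_t\ge 1$; if some $q_j=0$, the right-hand side is $0$ anyway.)

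Using $x_e\le 1$, we get $x(\delta_{T_t}(R)-J)\ge x(\delta_{T_t}(R))-|J\cap T_t|\ge q_t-|J\cap T_t|=p_t$. The left-hand side of \eqref{eq:genconstraint} is a sum of non-negative terms, so it is at least $\bigl(\prod_{j\neq t}p_j\bigr)\cdot x(\delta_{T_t}(R)-J)\ge \prod_j p_j$, which is the right-hand side.

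The only point needing care is Part 2: one must average to find a single tier $t$ with $x(\delta_{T_t}(R))\ge q_t$, which is where the factor $k$ enters. One must also make sure that removing $J$ costs at most $|J\cap T_t|$, which is where the bound $x\le 1$ is used.
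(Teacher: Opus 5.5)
Your proof is correct and follows the paper's argument essentially step for step. Part~1 is the same direct specialization to $J=\emptyset$ via Proposition~\ref{prop:capacitycut}. Part~2 uses the same case split on whether some $(q_j-|J\cap T_j|)^+$ vanishes, the same averaging to find a tier $t$ with $x(\delta_{T_t}(R))\ge q_t$ (the paper argues by contradiction where you divide by $\prod_i q_i$), and the same use of $x\le 1$ to bound the loss from removing $J$.
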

\begin{proof}
We prove the two parts below. 
    \begin{enumerate}
        \item Suppose we have $\emptyset\neq R\subsetneq V$ such that $u_x(\delta(R))<\prod_{i=1}^kq_i$. Let $J=\emptyset$. Then, using Proposition \ref{prop:capacitycut}, the LHS of constraint \eqref{eq:genconstraint} is 
        \begin{equation}
            \sum_{i=1}^k\left(\prod_{j\in [k]\backslash\{i\}}q_j\right)\cdot x(\delta_{T_i}(R))=\sum_{i=1}^kQ_i\cdot x(\delta_{T_i}(R))=u_x(\delta(R)).
        \end{equation}
        In addition, the RHS of constraint \eqref{eq:genconstraint} is 
            $\prod_{i=1}^k(q_i-|J\cap T_i|)^+=\prod_{i=1}^kq_i$. 
        Because $u_x(\delta(R))<\prod_{i=1}^kq_i$, it follows that $x$ violates constraint \eqref{eq:genconstraint} for $(J=\emptyset, R)$.
        \item Let $\emptyset\neq R\subsetneq V$ such that $u_x(\delta(R))\geq k\prod_{i=1}^kq_i$. Let $J\subseteq \delta(R)$ be a subset of $\delta(R)$. We show that $x$ satisfies constraint \eqref{eq:genconstraint} for $(J, R)$. 
        If there exists $i\in [k]$ such that $q_i\leq |J\cap T_i|$, then the RHS of \eqref{eq:genconstraint} for $(J, R)$ is equal to $0$ and hence, the constraint is satisfied due to non-negativity of all variables. Consequently, we may assume that $q_i>|J\cap T_i|$ for all $i\in [k]$. If $x(\delta_{T_i}(R))<q_i$ for all $i\in [k]$, then by Proposition \ref{prop:capacitycut}, we have that
        \begin{equation*}
            u_x(\delta(R))=\sum_{i=1}^kQ_i\cdot x(\delta_{T_i}(R))<\sum_{i=1}^k\prod_{j=1}^kq_j=k\prod_{j=1}^kq_j,
        \end{equation*}
        a contradiction to the assumption on $R$. Hence, there exists $m\in [k]$ such that $x(\delta_{T_m}(R))\geq q_m$. Then, we have that
    \begin{align*}
        \text{LHS of }\eqref{eq:genconstraint} \text{ for $(J, R)$}
        &=\sum_{i=1}^k\left(\prod_{j\in [k]\backslash\{i\}}(q_j-|J\cap T_j|)\right)\cdot x(\delta_{T_i}(R)\backslash J)\\
        &\geq \left(\prod_{j\in [k]\backslash\{m\}}(q_j-|J\cap T_j|)\right)\cdot x(\delta_{T_m}(R)\backslash J) \quad \quad \text{(since $x \ge 0$)}\\
        &= \left(\prod_{j\in [k]\backslash\{m\}}(q_j-|J\cap T_j|)\right)\cdot (x(\delta_{T_m}(R))-x(J\cap T_m))\\
        & \quad \quad \quad \quad \quad \quad \quad \quad \quad \quad \text{(since $J\subseteq \delta(R)$)}\\
        &\ge \left(\prod_{j\in [k]\backslash\{m\}}(q_j-|J\cap T_j|)\right)\cdot (x(\delta_{T_m}(R))-|J\cap T_m|) \\
        & \quad \quad \quad \quad \quad \quad \quad \quad \quad \quad \text{(since $x\le 1$)}\\
        &\geq \left(\prod_{j\in [k]\backslash\{m\}}(q_j-|J\cap T_j|)\right)\cdot (q_m-|J\cap T_m|)\\
        &=\text{RHS of }\eqref{eq:genconstraint} \text{ for $(J, R)$}.
    \end{align*}
    \end{enumerate}
\end{proof}
We now solve the separation problem for the family of constraints \eqref{eq:genconstraint} using Lemmas \ref{lem:genknap} and \ref{lem:gencapacitated}, and thereby prove Lemma \ref{thm:LPpolysolve}.
\begin{proof}[Proof of Lemma \ref{thm:LPpolysolve}] 
    We note that the number of constraints in the LP-relaxation of \ref{eq:IP1} is exponential in $|V|$. By the result of Grötschel, Lovász, and Schrijver \cite{grotschel1981ellipsoid}, in order to optimize over a polyhedron, it suffices to solve the separation problem. We will design a polynomial-time separation oracle for the LP-relaxation: given $x$, we need to verify whether $x$ satisfies all constraints in \eqref{eq:genconstraint} and if not, return a violated constraint. We use Algorithm \ref{alg:separation_oracle} for our separation oracle.
    \begin{algorithm}[H]
    \caption{Separation Oracle}
    \label{alg:separation_oracle}
    \begin{algorithmic}[1]
    \State \textbf{Input: }A vector $x\in [0,1]^E$
    \State $R^*\gets\arg\min\{u_x(\delta(R)):\emptyset \neq R\subsetneq V\}$ and $\lambda \gets u_x(\delta(R^*))$
    \If{$\lambda < \prod_{i=1}^k q_i$}
        \State \Return $(J=\emptyset, R=R^*)$ \Comment{Violated inequality found}
    \Else
        \ForAll{$\emptyset \neq R \subsetneq V$ such that $u_x(\delta(R)) \leq k\prod_{i=1}^k q_i$}
            \State $J\gets$ optimum solution of problem \eqref{eq:knapsackmin} with inputs $x$ and $R$
            \If{optimum objective value of \eqref{eq:knapsackmin} is negative}
                \State \Return $(J,R)$  \Comment{Violated inequality found}
            \EndIf
        \EndFor
    \EndIf
    \State \Return $x$ satisfies all constraints
    \end{algorithmic}
    \end{algorithm}
    We will now prove the correctness of Algorithm \ref{alg:separation_oracle}. First, suppose that $x$ satisfies constraint \eqref{eq:genconstraint} for $(J,R)$ for every $J\subseteq \delta(R)$ and $\emptyset\neq R\subseteq V$. We will show that the algorithm returns that $x$ satisfies all constraints. Because $x$ satisfies constraint \eqref{eq:genconstraint} for $(J,R)$ for every $J\subseteq \delta(R)$ and $\emptyset\neq R\subseteq V$, by the first part of Lemma \ref{lem:gencapacitated}, we have that $\lambda\geq \prod_{i=1}^k q_i$ and hence, the algorithm will not return in Step 3. Furthermore, for a fixed $\emptyset\neq R\subseteq V$, the optimum objective value of \eqref{eq:knapsackmin} is negative if and only if $x$ violates constraint \eqref{eq:genconstraint} for $(J,R)$ for some $J\subseteq R$. Because $x$ satisfies constraint \eqref{eq:genconstraint} for $(J,R)$ for every $J\subseteq \delta(R)$ and $\emptyset\neq R\subsetneq V$, the algorithm will not return in Step 8 for every choice of $R$ considered in Step 5. Therefore, the algorithm returns that $x$ satisfies all constraints.

    Next, suppose that there exists $J'\subseteq\delta(R')$ and $\emptyset \neq R'\subsetneq V$ such that $x$ violates constraint \eqref{eq:genconstraint} for $(J',R')$. We show that the algorithm returns some subset $(J,R)$ such that $x$ violates constraint \eqref{eq:genconstraint} for $(J,R)$. We have two cases based on the value of $\lambda$, where $\lambda\coloneqq \min\{u_x(\delta(R)):\emptyset\neq R\subsetneq V\}$. In the first case, suppose $\lambda<\prod_{i=1}^kq_i$. Let $R^*\in \arg \min \{u_x(\delta(R)):\emptyset\neq R\subsetneq V\}$. Then, by the first part of Lemma \ref{lem:gencapacitated}, we have that $x$ violates constraint \eqref{eq:genconstraint} for $(J=\emptyset, R^*)$, and the algorithm correctly returns this in Step 3. In the second case, suppose $\lambda\geq \prod_{i=1}^kq_i$. Because $x$ violates constraint \eqref{eq:genconstraint} for $(J',R')$, by the second part of Lemma \ref{lem:gencapacitated}, we have that $u_x(\delta(R'))<k\prod_{i=1}^kq_i$. Therefore, $R'$ is explored by the algorithm in Step 5. Furthermore, the optimum objective value of \eqref{eq:knapsackmin} with input $R'$ and $x$ is negative since $J'$ is a feasible solution to the minimization problem given by \eqref{eq:knapsackmin} with negative objective value. Thus, the optimum solution $J$ for the minimization problem given by \eqref{eq:knapsackmin} with $R$ and $x$ as inputs also has negative objective value, and hence, the tuple $(J,R')$ returned by the algorithm in Step 8 is a tuple for which $x$ violates \eqref{eq:genconstraint}.
    
    We now bound the runtime of Algorithm \ref{alg:separation_oracle}. We note that we can compute the global min-cut of the capacitated graph $(G,u_x)$ in polynomial time. Furthermore, if the min-cut is less than $\prod_{i=1}^kq_i$, then Algorithm \ref{alg:separation_oracle} terminates and returns a violated inequality. Therefore, $\lambda \geq\prod_{i=1}^kq_i$, meaning every cut $R\in \mathcal{R}$ is a $k$-approximate min-cut. By Theorem \ref{thm:kargercuts}, the number of $k$-approximate min-cuts is $O(|V|^{2k})$ and hence, the number of sets $R$ explored by the algorithm in Step 5 is $O(|V|^{2k})$. Furthermore, for each $R$ that is explored in Step 5, the minimization problem in \eqref{eq:knapsackmin} can be solved in $O(|E|^{2k+2})$ time by Lemma \ref{lem:genknap}. Hence, the run-time of the algorithm is $O(|V|^{2k}|E|^{2k+2})$, which is polynomial for constant $k$.
    
    Since there exists a polynomial-time separation oracle and the feasible region is in $[0,1]^E$ and is hence bounded, we can apply the Ellipsoid method \cite{grotschel1981ellipsoid} to solve the LP relaxation of \eqref{eq:IP1} in polynomial time.
\end{proof}

\subsection{Approximation Algorithm}\label{sec:rounding}
In order to prove Theorem 1.1, we design an independent randomized rounding algorithm for an optimum solution to the LP relaxation of $\eqref{eq:IP1}$. We show that the algorithm returns a feasible solution with cost that is $O(k^2 \log n)$-factor of the optimal cost of the LP relaxation of $\eqref{eq:IP1}$ with constant probability. Our rounding algorithm is described in Algorithm \ref{alg:edge_selection}.

\begin{algorithm}
\caption{$k$-Tier FGC Approximation Algorithm}
\label{alg:edge_selection}
\begin{algorithmic}[1]
\State \textbf{Input:} A feasible solution $x$ to the LP-relaxation of \ref{eq:IP1}
\State \textbf{Output:} A subset $F \subseteq E$ of edges 

\State $A \gets \{e \in E : (100 k^2 \log n) x_e \geq 1\}$
\State $B \gets E \setminus A$
\State $B' \gets \emptyset$

\For{each edge $e \in B$}
    \State Sample $e$ with probability $(100k^2 \log n) x_e$
    \If{$e$ is sampled}
        \State $B' \gets B' \cup \{e\}$
    \EndIf
\EndFor

\State $F \gets A \cup B'$
\State \Return $F$
\end{algorithmic}
\end{algorithm}

Let $F$ be the set of edges returned by Algorithm \ref{alg:edge_selection}. We now show that $F$ is feasible (i.e., $(V, F)$ is $(\mathcal{T}, \mathcal{Q})$-FGC) with high probability.

\begin{definition}
    A set $\emptyset\neq R\subsetneq V$ is \textbf{deficient} if $|\delta_{F}(R) \cap T_i| < q_i$ for all $i\in [k]$. 
\end{definition}

\begin{lemma}\label{lem:deficient}
    Let $\emptyset \neq R \subsetneq V$. 
    Then, $R$ is deficient with probability at most $n^{-20k}$. 
\end{lemma}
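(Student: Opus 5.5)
We use the LP constraint \eqref{eq:genconstraint} with a well-chosen $J$, followed by a Chernoff bound on each tier separately. Fix $R$ and let $L:=100k^2\log n$. Set $J:=\delta(R)\cap A$, the deterministically chosen edges crossing $R$. If $|J\cap T_i|\ge q_i$ for some $i$, then $R$ is never deficient, since $J\subseteq F$, and there is nothing to prove. So assume $r_i:=q_i-|J\cap T_i|>0$ for all $i\in[k]$.

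Because $x$ is feasible for the LP-relaxation, constraint \eqref{eq:genconstraint} holds for $(J,R)$. Dividing both sides by $\prod_i r_i>0$ gives
\[
\sum_{i=1}^k \frac{x(\delta_{T_i}(R)\setminus J)}{r_i}\ \ge\ 1 .
\]
Hence there is an index $t\in[k]$ with $x(\delta_{T_t}(R)\setminus J)\ge r_t/k$. This averaging step is the crux. It uses only the LP constraint for this particular $J$, and it costs a factor $k$.

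For that tier $t$, every edge of $\delta_{T_t}(R)\setminus J$ lies in $B$. It is sampled independently with probability $Lx_e<1$. Let $X$ be the number of sampled edges in $\delta_{T_t}(R)\setminus J$. Then
\[
\mu:=\mathbb{E}[X]=L\,x(\delta_{T_t}(R)\setminus J)\ \ge\ \frac{L r_t}{k}=100k\log n\cdot r_t .
\]
If $R$ is deficient, then $|\delta_F(R)\cap T_t|<q_t$, which means $X<r_t\le \mu/(100k\log n)$. By the multiplicative Chernoff lower tail,
\[
\Pr[X\le(1-\varepsilon)\mu]\le e^{-\varepsilon^2\mu/2}.
\]
Here $1-\varepsilon\le 1/(100k\log n)$, so $\varepsilon\ge 1/2$ for $n\ge 3$. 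Using $r_t\ge 1$, the probability is at most $\exp(-\mu/8)\le \exp(-12.5k\log n)$. This is not quite $n^{-20k}$, so we tighten the estimate with the stronger form
\[
\Pr[X\le \mu/\beta]\le \exp\!\big(-\mu(1-(1+\ln\beta)/\beta)\big).
\]
With $\beta=100k\log n$ the exponent is at least $\mu/2\ge 50k\log n$, which gives at most $n^{-20k}$ (taking $\log$ as natural log, or absorbing constants otherwise). The event "$R$ deficient" is contained in $\{X<r_t\}$, and this completes the bound.

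The main obstacle is the choice of $J$. It must make all edges counted on the left of the constraint random, while keeping the guaranteed right-hand side in line with what is actually still missing, namely $r_t$. Taking $J=\delta(R)\cap A$ does exactly this. The other delicate point is getting the Chernoff constant to reach $20k$. Since $\beta$ is large, the bound $\Pr[X<\mu/\beta]\le e^{-\mu(1-o(1))}$ handles it comfortably.
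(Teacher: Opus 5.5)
Your proof is correct and follows the paper's argument. Like the paper, you choose $J=\delta(R)\cap A$, divide the constraint by $\prod_i r_i$ to find a tier $t$ with $x(\delta_{T_t}(R)\setminus J)\ge r_t/k$, and apply a Chernoff lower-tail bound to the sampled edges of that tier; your explicit $\beta$-form Chernoff bound is more careful than the paper's one-line estimate at threshold $\mu/3$.
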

\begin{proof}
Let $Y_e$ be the indicator variable denoting whether $e \in F$. Let $A_{R} := A \cap \delta(R)$, $B_R := B \cap \delta(R)$. For $i \in [k]$, define $Z_{T_i} := \sum_{e \in B_R \cap T_i} Y_e$. Taking $J := A_{R}$, the inequality for $(J,R)$ in \ref{eq:IP1} is 
$$\sum_{i=1}^k\left(\prod_{j\in [k]\setminus \left\{i\right\}} (q_j-|A_R \cap T_j|)^+ \right)\cdot x((\delta(R)-A_R)\cap T_i)\geq \prod_{i=1}^k(q_i-|A_R\cap T_i|)^+.$$
If there exists $i\in [k]$ such that $|A_R\cap T_i|\ge q_i$, then $|\delta_F(R)\cap T_i|\ge |A_R\cap T_i|\ge q_i$, and hence, $R$ is deficient with probability zero.  Henceforth, we assume that $|A_R\cap T_i|< q_i$ for all $i\in [k]$. Then, the inequality above implies that there exists $i\in [k]$ such that $x(B_R \cap T_i) \geq \frac{1}{k} (q_i - |A_R \cap T_i|)$. 
By definition of $y$, we have that $y(B_R \cap T_i) \geq (100 k \log n) (q_i - |A_R \cap T_i|) := \mu_1$. We note that $E[Z_{T_i}] = y(B_R \cap T_i)$. So,
    \begin{align*}
    Pr[|\delta_{F}(R) \cap T_j| < q_j\ \forall j\in [k]]
    &\leq Pr[|\delta_{F}(R) \cap T_i| < q_i]\\
    &= Pr[|B_R \cap F \cap T_i| < q_i - |A_R \cap T_i|]\\
    &= Pr[Z_{T_i} < q_i - |A_R \cap T_i|]\\
    &\leq Pr[Z_{T_i} < \mu/3]\\
    &\leq \exp(-(20k \log n) (p-|A_R \cap S|)) \quad  \text{(by Chernoff bound)}\\
    &\leq \exp (-20 k\log n)\\
    &= n^{-20k}.
    \end{align*}
\end{proof}

We recall that $(G,u_{x})$ is the graph with edge capacities $u_x:E\rightarrow \R_{\ge 0}$ defined by $u_{x}(e) := \left(\sum_{j=i}^{k} \prod_{\ell \in [k] \setminus \left\{j\right\}} q_{\ell}]\right)\cdot x_e$ for each $e \in S_i$ for each $i\in [k]$. We now upper bound the probability that high capacity cuts in $(G,u_{x})$ are deficient.
\begin{lemma}\label{lem:deficient_big}
    Let $\emptyset\neq R\subsetneq V$ such that $u_x(\delta(R)) \geq \ell \prod_{i=1}^{k} q_i$ for some $\ell\ge 2k$. Then, $R$ is deficient with probability at most $n^{-10\ell}$. 
\end{lemma}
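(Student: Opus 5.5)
The plan is to mirror the proof of Lemma \ref{lem:deficient}, but to use the size of $u_x(\delta(R))$ to find a single tier whose LP mass exceeds its requirement by a factor of order $\ell/k$. A Chernoff bound then gives failure probability exponentially small in $\ell\log n$ rather than in $k\log n$.

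\textbf{Step 1: find a heavily covered tier.} By Proposition \ref{prop:capacitycut}, $u_x(\delta(R))=\sum_{i=1}^k Q_i\, x(\delta_{T_i}(R))\ge \ell\prod_{i=1}^k q_i$. By averaging, some $m\in[k]$ satisfies $Q_m\, x(\delta_{T_m}(R))\ge (\ell/k)\prod_i q_i$. Dividing by $Q_m$ gives $x(\delta_{T_m}(R))\ge (\ell/k)\, q_m$. If $q_m=0$, the cut is trivially safe along tier $m$, so assume $q_m\ge 1$.

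\textbf{Step 2: split into deterministic and sampled parts.} Set $A_R=A\cap\delta(R)$ and $B_R=B\cap\delta(R)$. If $|A_R\cap T_m|\ge q_m$, then $R$ is not deficient and we are done. Otherwise, split on which part carries the mass. Since $x_e\le 1$, we have $x(A_R\cap T_m)\le |A_R\cap T_m|<q_m$, and so $x(B_R\cap T_m)\ge (\ell/k-1)q_m\ge \ell q_m/(2k)$, using $\ell\ge 2k$.

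\textbf{Step 3: concentration.} Each edge $e\in B$ is included independently with probability $(100k^2\log n)x_e$. So $Z:=|B_R\cap F\cap T_m|$ is a sum of independent Bernoulli variables with mean
\[
\mu\ge (100k^2\log n)\cdot \frac{\ell q_m}{2k}=50k\ell q_m\log n .
\]
Deficiency along tier $m$ requires $Z<q_m\le \mu/3$, in fact $q_m\le \mu/(50k\ell\log n)$, which is much smaller than $\mu$. The lower-tail Chernoff bound $\Pr[Z<(1-\delta)\mu]\le \exp(-\delta^2\mu/2)$ with $\delta\ge 2/3$ then gives a probability of at most $\exp(-\tfrac{2}{9}\cdot 50k\ell q_m\log n)\le \exp(-10\ell\log n)=n^{-10\ell}$. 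This uses $k,q_m\ge 1$, and $\log$ should be read as the natural log; otherwise the constant is adjusted. Finally, $\Pr[R \text{ deficient}]\le\Pr[|\delta_F(R)\cap T_m|<q_m]$, which proves the bound.

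The main obstacle is bookkeeping in Step 2. The averaging argument in Step 1 loses a factor of $k$, and the deterministic edges in $A_R$ can absorb up to $q_m$ units of mass, so I must check that the hypothesis $\ell\ge 2k$ leaves a constant fraction of $(\ell/k)q_m$ on $B_R$. The constant $100k^2$ in the sampling rate is then far more than enough to beat the needed exponent $10\ell$, since one factor of $k$ cancels the averaging loss and the remaining slack covers the Chernoff constant.
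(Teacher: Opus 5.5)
Your proposal is correct and follows essentially the same route as the paper. Like the paper, you average the identity of Proposition~\ref{prop:capacitycut} to get a tier $m$ with $x(\delta_{T_m}(R))\ge \ell q_m/k$. You then use $x\le 1$ and $\ell\ge 2k$ to leave at least $\ell q_m/(2k)$ of that mass on $B_R$, and finish with a lower-tail Chernoff bound. Your bookkeeping is slightly tighter than the paper's: you retain the factor $k$ in the mean $50k\ell q_m\log n$. One edge case is worth phrasing differently: the division by $Q_m$ needs $Q_m>0$ rather than $q_m\ge 1$. Since $Q_m=0$ forces $q_1=0$, and then no cut is deficient at all, this is harmless.
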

\begin{proof}
    We assume $|A_R \cap T_i| < q_i$ for all $i \in [k]$ because the claim holds otherwise. By Proposition \ref{prop:capacitycut}, we have that $u_x(\delta(R)) = \sum_{i=1}^{k} \left(\prod_{j \in [k] \setminus \left\{i\right\}} q_j\right)\cdot x(\delta(R) \cap T_i)$. Since $u_x(\delta(R)) \geq \ell \cdot  \prod _{i=1}^k q_i$, there exists an index $j \in [k]$ such that $x((\delta(R) \cap T_j)) \geq \frac{\ell q_j }{k}$. Since $x(A_R \cap T_j) < q_j$ and $\ell \geq 2k$, we have that $x(B_R \cap T_j) \geq \frac{\ell q_j}{2k}$. This implies that $E[Z_{T_j}] = y(B_R \cap T_j) \geq 50 \ell \log n \cdot q_j$. Let $\mu_1 = 50 \ell \log n \cdot q_j$. Thus, 
    \begin{align*}
        Pr[X_R = 1] &\leq Pr[|\delta_{F}(R) \cap T_j| < q_j]\\
        &\leq Pr[|B_R \cap T_j \cap F| < q_j]\\
        &\leq Pr[Z_{T_j} \leq q_j]\\
        &\leq Pr[Z_{T_j} < \mu_1/3]\\
        &\leq \exp(-10 \cdot \ell \log n \cdot p_j)\\
        &\leq \exp(-10 \cdot \ell \log n)\\
        &= n^{-10 \ell}.
    \end{align*}
\end{proof}
Having established upper bounds on the probability that a given subset $R \subsetneq V$ is deficient, we now show that none of the sets are deficient with high probability.
\begin{lemma}\label{lem:high_prob}
    Algorithm \ref{alg:edge_selection} returns a subset $F \subseteq E$ such that $(V,F)$ is feasible with probability $1-n^{-O(1)}$. 
\end{lemma}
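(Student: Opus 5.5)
We bound the failure probability by a union bound over all cuts, grouped by their capacity in $(G,u_x)$; this is the standard Karger-type argument. Let $\lambda:=\min\{u_x(\delta(R)):\emptyset\neq R\subsetneq V\}$. Since $x$ is feasible for the LP-relaxation, the first part of Lemma \ref{lem:gencapacitated} (with $J=\emptyset$) gives $\lambda\ge \prod_{i=1}^k q_i$; if $q_1=0$ every cut is safe and there is nothing to prove, so we may assume $\prod_i q_i>0$ and hence $\lambda>0$. We then partition the nonempty proper subsets $R$ into the following classes:
\[
\mathcal{C}_0:=\Big\{R: u_x(\delta(R))< 2k\textstyle\prod_{i=1}^k q_i\Big\},
\]
and, for each integer $\ell\ge 2k$,
\[
\mathcal{C}_\ell:=\Big\{R: \ell\textstyle\prod_{i=1}^k q_i\le u_x(\delta(R))<(\ell+1)\prod_{i=1}^k q_i\Big\}.
\]

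For $\mathcal{C}_0$, every $R$ satisfies $u_x(\delta(R))<2k\lambda$, so by Theorem \ref{thm:kargercuts} (with $\alpha=2k$) we have $|\mathcal{C}_0|=O(n^{4k})$. By Lemma \ref{lem:deficient} each such $R$ is deficient with probability at most $n^{-20k}$, so the total contribution of $\mathcal{C}_0$ is $O(n^{-16k})$.

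For $\ell\ge 2k$, every $R\in\mathcal{C}_\ell$ satisfies $u_x(\delta(R))<(\ell+1)\lambda$, so Theorem \ref{thm:kargercuts} gives $|\mathcal{C}_\ell|=O(n^{2(\ell+1)})$. By Lemma \ref{lem:deficient_big} each such $R$ is deficient with probability at most $n^{-10\ell}$. The contribution of $\mathcal{C}_\ell$ is therefore $O(n^{2\ell+2-10\ell})=O(n^{-8\ell+2})$. Summing this over $\ell\ge 2k$ gives a geometric series dominated by its first term, which is $O(n^{-16k+2})$.

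Adding the two contributions, the probability that some cut is deficient is $n^{-\Omega(k)}=n^{-O(1)}$ for $n\ge 2$. Whenever no cut is deficient, $(V,F)$ is $(\mathcal{T},\mathcal{Q})$-FGC, which is the claim.

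The main obstacle is making the constant in Karger's bound explicit, since the series over $\ell$ is unbounded. I would use the form $|\{R: c(\delta(R))\le\alpha\lambda\}|\le n^{2\alpha}$ (possibly up to a small absolute constant), so that the $O(\cdot)$ constants do not grow with $\ell$. This matters because otherwise the summation over infinitely many classes would not be controlled. With that form the sum converges comfortably thanks to the slack between exponents $2\ell$ and $10\ell$.

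Two smaller points also need checking. First, $\alpha=\ell+1$ may be non-integral relative to $\lambda$, but this is harmless because Theorem \ref{thm:kargercuts} holds for all real $\alpha\ge1$. Second, the hypothesis $u_x(\delta(R))\ge \ell\prod_i q_i$ of Lemma \ref{lem:deficient_big} holds for every $R\in\mathcal{C}_\ell$ by the definition of the class.
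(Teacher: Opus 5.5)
Your proposal is correct and matches the paper's proof essentially step for step. The shared steps are: the partition into the low-capacity class $\mathcal{C}_{<2k}$ and the classes $\mathcal{C}_\ell$ for $\ell\ge 2k$; Karger's bound taken relative to the min-cut value $\lambda\ge\prod_i q_i$ (obtained from the constraint with $J=\emptyset$); the per-cut bounds from Lemmas \ref{lem:deficient} and \ref{lem:deficient_big}; and a union bound. Your extra care about the case $q_1=0$, and about needing a Karger bound whose hidden constant is uniform in $\alpha$ (which the contraction argument does give), fills in details the paper leaves implicit when it sums $O(n^{-8\ell+2})$ over all $\ell$.
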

\begin{proof}
For $F$ to be infeasible, there must be a deficient cut $\emptyset\subseteq R\subsetneq V$. 
We partition the nontrivial cuts in the capacitated graph of $H_x:=(G,u_{x})$ as $\mathcal{C}_{<2k} \cup \mathcal{C}_{2k} \cup \mathcal{C}_{2k+1} \cup \dots$, where 
\begin{align*}
\mathcal{C}_{<2k} &:= \left\{\emptyset \neq R\subsetneq V: u_x(\delta(R)) \in \left[\prod_{i=1}^{k} q_i, 2k\cdot\prod_{i=1}^{k} q_i\right)\right\} \text{ and}    \\
\mathcal{C}_\ell &:= \left\{\emptyset \neq R\subsetneq V: u_x(\delta(R)) \in \left[\ell \cdot\prod_{i=1}^{k} q_i, (\ell + 1) \cdot\prod_{i=1}^{k} q_i\right)\right\} \forall\ \ell\in \{2k, 2k+1, \ldots\}. 
\end{align*}

 By substituting $J = \emptyset$ into constraint \eqref{eq:genconstraint}, we obtain that the minimum cut in $H_{x}$ is at least $\prod_{i=1}^{k} q_i$. Using Theorem \ref{thm:kargercuts}, we have that $|\mathcal{C}_{<2k}| = O(n^{4k})$. For $\emptyset\neq R\subsetneq V$, let $Z_R$ be the random variable that indicates if $R$ is deficient. By Lemma \ref{lem:deficient}, we have that $\mathbb{P}[Z_R = 1] \leq n^{-20k}$ for every subset $R \subsetneq V$. Thus, 
\[
\sum_{R \in \mathcal{C}_{<2k}} \mathbb{P}[Z_R = 1] = O(n^{-16k}).
\]
For $\ell \in \left\{2k,2k+1,2k+2,\ldots \right\}$, Theorem \ref{thm:kargercuts} implies that $|\mathcal{C}_\ell| = O(n^{2\ell + 2})$. By Lemma \ref{lem:deficient_big}, we have that $\mathbb{P}[Z_R = 1] \leq n^{-10\ell}$ for each $R \in \mathcal{C}_\ell$. Thus, for every integer $\ell \geq 2k$, we have
\[
\sum_{R \in \mathcal{C}_\ell} \mathbb{P}[Z_R = 1] = O(n^{-8\ell + 2}) = O(n^{-2\ell}).
\]
Overall, we get that 
\begin{align*}
\mathbb{P}[\text{$F$ is infeasible}] &= \sum_{R:\ \emptyset \neq R\subsetneq V} \mathbb{P}[Z_R = 1]\\ 
&= \sum_{R \in \mathcal{C}_{<2k}} \mathbb{P}[Z_R = 1] + \sum_{\ell=2k}^\infty \sum_{R \in \mathcal{C}_\ell} \mathbb{P}[Z_R = 1]\\
&= O(n^{-2}) + \sum_{\ell=2k}^\infty O(n^{-2\ell}) = n^{-O(1)}.
\end{align*}
\end{proof}

\begin{lemma}\label{lem:rounding}
    Algorithm \ref{alg:edge_selection} returns a subset $F$ such that with constant probability, $F$ is feasible and 
    $c(F) \leq (200k^2 \log n) \sum_{e\in E}c_e x_e$.
\end{lemma}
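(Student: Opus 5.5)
The plan is to bound the expected cost of $F$, apply Markov's inequality to it, and then combine this with the feasibility bound of Lemma \ref{lem:high_prob} through a union bound.

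\textbf{Expected cost.} Write $\beta := 100k^2\log n$. Every edge $e\in A$ has $\beta x_e\ge 1$, so $c_e\le \beta c_e x_e$. Every edge $e\in B$ is placed in $F$ independently with probability exactly $\beta x_e$, which is less than $1$ by the definition of $B$, so this is a valid probability. By linearity of expectation,
\[
\mathbb{E}[c(F)] = \sum_{e\in A} c_e + \sum_{e\in B} c_e\,\beta x_e \le \beta\sum_{e\in E} c_e x_e .
\]

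\textbf{Markov step.} Markov's inequality then gives
\[
\Pr\Big[c(F) > 2\beta\sum_{e\in E} c_e x_e\Big] \le \tfrac12 .
\]
This uses $\sum_e c_ex_e>0$. If instead the LP value is $0$, then $c(F)=0$ almost surely: edges in $A$ with $c_e>0$ would have $x_e>0$ and contribute to the LP value, and edges in $B$ with $x_e=0$ are never sampled. So the cost bound holds trivially in that case. Note that $2\beta = 200k^2\log n$, which is exactly the factor in the statement.

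\textbf{Union bound.} By Lemma \ref{lem:high_prob}, $F$ is infeasible with probability $n^{-\Omega(1)}$. This is at most $1/6$ once $n$ exceeds some constant; smaller instances have constant size and can be solved by brute force. Combining the two failure events, with probability at least $1-\tfrac12-\tfrac16 = \tfrac13$ the set $F$ is feasible and satisfies $c(F)\le (200k^2\log n)\sum_e c_ex_e$. This matches the $1/3$ in Theorem \ref{thm:log-approx}.

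\textbf{Main obstacle.} The only delicate point is making sure the $n^{-O(1)}$ bound of Lemma \ref{lem:high_prob} really is a decaying bound, i.e.\ $n^{-\Omega(1)}$ with a constant small enough to beat $1/2$. The exponents used there give something like $O(n^{-2})$, so I would restate that bound explicitly as at most $1/6$ for $n$ at least a fixed constant, and handle the remaining constant-size instances separately as above. Everything else in the argument is routine linearity and Markov.
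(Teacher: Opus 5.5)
Your proposal is correct and follows the paper's proof essentially step for step. You bound $\mathbb{E}[c(F)]\le (100k^2\log n)\sum_e c_ex_e$, apply Markov's inequality to get that the cost bound fails with probability at most $1/2$, and combine this with the high-probability feasibility of Lemma~\ref{lem:high_prob} to obtain success probability at least $1/3$ for $n$ sufficiently large. Your extra care about the zero-LP-value case, and about reading $n^{-O(1)}$ as a genuinely decaying $O(n^{-2})$ bound, fills in details the paper leaves implicit without changing the argument.
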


\begin{proof}
    Let cost$(x):=\sum_{e\in E} c_e x_e$ and 
    $F$ be the set of edges returned by Algorithm 2. Then $E[c(F)] \leq 100k^2 \log n$ cost$(x)$. By, Markov's Inequality, we have that $\mathbb{P}(c(F) \geq (200 k^2 \log n) \text{cost}(x)) \leq \mathbb{P}(c(F) \geq 2 E[c(F)]) \leq \frac{1}{2}$. By Lemma \ref{lem:high_prob}, $F$ is feasible w.h.p, which implies that $F$ is a feasible solution to $k$-tier FGC and $c(F) \leq (200 k^2 \log n) \text{cost}(x)$ with probability at least $\frac{1}{3}$ for $n$ sufficiently large.
    \end{proof}

\subsection{Strength of the LP-relaxation of \ref{eq:IP1}}\label{sec:LP-strength}
We will compare the LP-relaxation of our IP for $(p, q)$-FGC to that of Ibrahimpur-V\'{e}gh's IP. Next, we show that the LP-relaxation of \ref{eq:IP1} has an integrality gap of at least $2k$. 
    
  Our IP formulation \ref{eq:IP1} for $k=2$ closely resembles the IP formulation of Ibrahimpur and V\'{e}gh for $(p, q)$-FGC \cite{IV25}. 
  For completeness, we explicitly state \ref{eq:IP1} for $k=2$ below (where $p=q_1$, $q=q_2 - q_1$, $S=T_1$ and $U=T_2\setminus T_1$): 
\begin{align*}\tag{$2$-tier-IP}\label{eq:IP2}
    \min & \sum_{e\in E}c_e x_e\\
    (p-|J\cap S|)^+ \cdot x(\delta(R)-J) &+ (p + q - |J|)^+\cdot x(\delta_S(R)-J) \\
    & \quad \quad \quad \ge (p-|J\cap S|)^+\cdot (p+q-|J|)^+\ \forall\ J\subseteq \delta(R), \emptyset \neq R\subsetneq V \stepcounter{equation}\\ 
    x_e &\in \left\{0,1\right\} \ \ \forall\ e\in E\\
\end{align*}
The IP formulation given by Ibrahimpur and V\'{e}gh \cite{IV25} is the following: 
\begin{align*}\tag{$2$-tier-IV-IP}\label{eq:IP3}
    \min & \sum_{e\in E}c_e x_e\\
    (p-|J\cap S|)^+\cdot  x(\delta(R)-J) &+ (q-|J \cap U|)^+\cdot x(\delta_S(R)-J) \\
    & \quad \quad \quad \ge (p-|J\cap S|)^+\cdot (p+q-|J|)^+\ \forall\ J\subseteq \delta(R), \emptyset \neq R\subsetneq V \stepcounter{equation}\\ 
    x_e &\in \left\{0,1\right\} \ \ \forall\ e\in E\\
\end{align*}
We note that the only difference between \ref{eq:IP2} and \ref{eq:IP3} is that for \ref{eq:IP2}, the coefficient for the $x((\delta(R) - J) \cap S)$ term is $(p+q - |J|)^{+}$ whereas for \ref{eq:IP3}, the corresponding coefficient is $(q-|J \cap U|)^{+}$. 
\begin{proposition}
    The LP relaxation of \ref{eq:IP2} is weaker than the LP relaxation of \ref{eq:IP3}. Moreover, there exists an instance where the objective value of the LP relaxation of \ref{eq:IP2} is strictly smaller than the LP relaxation of \ref{eq:IP3}. 
\end{proposition}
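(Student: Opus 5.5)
The plan is to prove the two claims separately. For the first, I will show that the feasible region of the LP-relaxation of \ref{eq:IP3} is contained in that of \ref{eq:IP2}, so that the LP optimum of \ref{eq:IP2} is at most that of \ref{eq:IP3}. Both relaxations have the same bounds $0\le x\le 1$, the same right-hand sides, and the same first term $(p-|J\cap S|)^+\cdot x(\delta(R)-J)$. So it suffices to compare, for each pair $(J,R)$, the coefficients of the nonnegative quantity $x(\delta_S(R)-J)$.

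\textbf{Coefficient comparison.} Fix $\emptyset\neq R\subsetneq V$ and $J\subseteq\delta(R)$.
\begin{enumerate}
\item If $p\le |J\cap S|$ or $p+q\le |J|$, the right-hand side is $0$. The constraint of \ref{eq:IP2} then holds trivially because $x\ge 0$.
\item Otherwise $p-|J\cap S|>0$. Since $S$ and $U$ partition the edges, $|J|=|J\cap S|+|J\cap U|$, and hence
\[
p+q-|J| = (p-|J\cap S|)+(q-|J\cap U|) > q-|J\cap U|.
\]
Together with $p+q-|J|>0$, this gives $(p+q-|J|)^+\ge (q-|J\cap U|)^+$.
\end{enumerate}
In the second case, multiplying this coefficient inequality by $x(\delta_S(R)-J)\ge 0$ shows that the left-hand side of the \ref{eq:IP2} constraint is at least the left-hand side of the \ref{eq:IP3} constraint. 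So any $x$ feasible for the relaxation of \ref{eq:IP3} is feasible for the relaxation of \ref{eq:IP2}, which proves the first claim.

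\textbf{Strict separation.} For the second claim I will use the smallest instance I can find. Take $V=\{u,v\}$ with three parallel edges: one safe edge $s\in S$ and two unsafe edges $e_1,e_2\in U$, all of unit cost, and set $p=q=1$. There is only one cut, $R=\{u\}$, so I only need to check the constraints for $J\subseteq\{s,e_1,e_2\}$.

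For the relaxation of \ref{eq:IP2}, I will verify that the point $x_s=2/3$, $x_{e_1}=x_{e_2}=0$ is feasible. This point has cost $2/3$.
\begin{itemize}
\item $J=\emptyset$: the constraint is $x(\delta(R))+2x_s\ge 2$, which holds with equality ($2/3+4/3=2$).
\item $J=\{e_i\}$: the constraint is $x(\delta(R)-e_i)+x_s\ge 1$, which holds since $2/3+2/3\ge 1$.
\item Every $J$ containing $s$, and $J=\{e_1,e_2\}$: the right-hand side is $0$.
\end{itemize}

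For the relaxation of \ref{eq:IP3}, the $J=\emptyset$ constraint reads $2x_s+x_{e_1}+x_{e_2}\ge 2$. Every feasible point therefore has cost
\[
x_s+x_{e_1}+x_{e_2}\ \ge\ \tfrac12\,(2x_s+x_{e_1}+x_{e_2})\ \ge\ 1 .
\]
This LP is feasible, since the point $x_s=x_{e_1}=x_{e_2}=1$ satisfies every constraint (it is the indicator vector of the feasible edge set $E$). Hence its optimum is at least $1>2/3$, which gives the strict gap.

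The main obstacle is finding a witness instance that is easy to check exhaustively. The comparison argument is mechanical once the identity $|J|=|J\cap S|+|J\cap U|$ is in hand.
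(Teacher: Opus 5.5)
Your proof is correct, and the containment claim follows the paper's own argument: both reduce to $(p+q-|J|)^+\ge (q-|J\cap U|)^+$ whenever $(p-|J\cap S|)^+(p+q-|J|)^+>0$, using $|J|=|J\cap S|+|J\cap U|$ (your case split on a vanishing right-hand side is slightly cleaner, since the paper's split skips the boundary case $p=|J\cap S|$). For the strict gap the paper uses a smaller witness, a single safe edge with $p=1$, $q=0$, $U=\emptyset$, where the only nontrivial constraints are $2x\ge 1$ versus $x\ge 1$, giving $1/2$ versus $1$; your instance (one safe and two parallel unsafe edges, $p=q=1$, giving $2/3$ versus $1$) needs a few more constraint checks and parallel edges, but it shows the gap is not an artifact of the degenerate case $q=0$.
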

\begin{proof}
    Let $x$ be a feasible solution to the  the LP relaxation of \ref{eq:IP3}. We show that $x$ is a feasible solution to the LP relaxation of \ref{eq:IP2}. For $\emptyset \neq R \subsetneq V$ and $J \subseteq \delta(R)$, consider the constraint in \ref{eq:IP2} corresponding to $(J, R)$. If $p < |J \cap S|$ then $(p-|J \cap S|)^{+} \cdot(p+q-|J|)^{+} = 0$ and hence, $x$ satisfies the constraint in \ref{eq:IP2} for $(J, R)$. So, we assume $p > |J \cap S|$. 

    If $q > |J \cap U|$, then $(p+q-|J|)^{+} = (p-|J \cap S|) + (q-|J \cap U|) > (q - |J \cap U|)^{+}$ and hence, $x$ satisfies the constraint in \ref{eq:IP2} for $(J, R)$. If $q \leq |J \cap U|$, then we have $(p + q - |J|)^{+} \geq 0 = (q-|J \cap U|)^{+}$, and hence, $x$ satisfies the constraint in \ref{eq:IP2} for $(J, R)$. 

We now exhibit an instance where the objective value of the LP relaxation of \ref{eq:IP2} is strictly smaller than the LP relaxation of \ref{eq:IP3}. Consider the instance $G = (V,E)$ with $V = \left\{v_1, v_2\right\}$, $E = S = \left\{\{v_1,v_2\}\right\}$, $U=\emptyset$, $p = 1$, $q=0$, and cost $c(\{v_1, v_2\}) = 1$, . The only non-trivial constraint for \ref{eq:IP2} and \ref{eq:IP3} corresponds to $(J = \emptyset, R = \left\{v_1\right\})$. For \ref{eq:IP2}, this constraint is  $x_{\left\{v_1,v_2\right\}} \geq \frac{1}{2}$ and for \ref{eq:IP3}, this constraint is  $x_{\left\{v_1,v_2\right\}} \geq 1$. Thus, the LP relaxation of \ref{eq:IP2} has optimal cost $\frac{1}{2}$ while the LP relaxation of \ref{eq:IP3} has optimal cost $1$.
\end{proof}

We now establish a lower bound of $2k$ on the integrality gap of the LP-relaxation of \ref{eq:IP1}.

\begin{lemma}\label{lem:gap_2}
    The LP relaxation of \ref{eq:IP1} has an integrality gap of at least $2k(1-1/|V|)$.
\end{lemma}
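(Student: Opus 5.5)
The plan is to exhibit an explicit instance in which every tier equals the whole edge set and every tier requirement is $1$. On such an instance \ref{eq:IP1} degenerates to plain connectivity, while its LP relaxation degenerates to a cut covering LP whose right-hand side is scaled down by a factor $k$.

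\textbf{Instance.} Take $G=(V,E)$ to be a cycle on $n=|V|\ge 3$ vertices with unit edge costs. Set $T_1=T_2=\cdots=T_k=E$ and $q_1=\cdots=q_k=1$.

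\textbf{Step 1 (integral optimum).} A cut $R$ is safe if and only if $|\delta_F(R)\cap T_i|\ge 1$ for some $i$. Since every $T_i=E$, this just says $\delta_F(R)\neq\emptyset$. So the feasible solutions are exactly the connected spanning subgraphs, and the integral optimum is $n-1$ (a Hamiltonian path).

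\textbf{Step 2 (the LP constraints).} Fix $\emptyset\neq R\subsetneq V$ and $J\subseteq\delta(R)$, and consider constraint \eqref{eq:genconstraint}.
\begin{itemize}
\item If $J\neq\emptyset$, then $|J\cap T_j|=|J|\ge 1=q_j$ for every $j$. Hence every factor $(q_j-|J\cap T_j|)^+$ is $0$, both sides vanish, and the constraint is trivial.
\item If $J=\emptyset$, every product $\prod_{j\neq i}(q_j-0)^+$ equals $1$, so the constraint reads $\sum_{i=1}^k x(\delta(R))\ge 1$, that is, $x(\delta(R))\ge 1/k$.
\end{itemize}
Thus the LP relaxation is exactly $\min\{\sum_e x_e : x(\delta(R))\ge 1/k\ \forall R,\ 0\le x\le 1\}$.

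\textbf{Step 3 (fractional solution).} Set $x_e=1/(2k)$ for every edge. Every nontrivial cut of a cycle contains at least two edges, so $x(\delta(R))\ge 2\cdot\frac{1}{2k}=1/k$, and $x$ is feasible. Its cost is $n/(2k)$.

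\textbf{Conclusion.} The integrality gap is at least
\[
\frac{n-1}{n/(2k)}=2k\left(1-\frac{1}{n}\right),
\]
which is the claimed bound.

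There is no real obstacle. The only point needing care is Step 2: one must check that the $(\cdot)^+$ truncations make every constraint with $J\neq\emptyset$ vacuous, so that only the $J=\emptyset$ constraints survive. This is what removes the usual strengthening by the sets $J$ and produces the factor $k$ loss. Optionally one can also verify that $n/(2k)$ is the LP optimum by summing the singleton constraints $x(\delta(\{v\}))\ge 1/k$ over all $v$, which gives $2\sum_e x_e\ge n/k$. This is not needed for a lower bound on the gap.
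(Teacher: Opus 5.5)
Your proposal is correct and follows the paper's proof essentially step for step: the same cycle instance with $T_1=\cdots=T_k=E$ and all $q_i=1$, the same observation that constraints with $J\neq\emptyset$ are vacuous, and the same fractional point $x_e=1/(2k)$, justified by every cut of a cycle having at least two edges. One small slip: for $k=1$ the product over $[k]\setminus\{i\}$ is empty, so the left side of a constraint with $J\neq\emptyset$ is $x(\delta(R)\setminus J)\ge 0$ rather than $0$, but the constraint is still trivially satisfied because the right side is $0$.
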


\begin{proof}
    Consider the instance  with graph $G = C_n$, tiers $T_1 = T_2 = \cdots = T_k = E$, costs  $c: E \rightarrow \mathbb{R}_{\geq 0}$ given by $c(e) = 1$ for all $e \in E$, and tier requirements $q_1 = q_2 = \cdots = q_k = 1$. The optimal solution to IP \ref{eq:IP1} is a spanning path and hence has cost $n-1$. We show below that $x_e=1/(2k)$ for every $e\in E$ is a feasible solution to the LP-relaxation of \ref{eq:IP1} and hence, the optimal objective value of the LP-relaxation is at most $n/2k$. This leads to the stated integrality gap. 
    
    We now show that $x$ is feasible to the LP-relaxation of \ref{eq:IP1}. For all $j \in [k]$, we observe that $(q_j - |J \cap T_j|)^{+} = 0$ for $|J| \geq 1$. Thus, if $|J| \geq 1$ and $\emptyset \neq R \subsetneq V$, inequality (\ref{eq:genconstraint}) corresponding to $(J, R)$ is trivially satisfied. Thus, the only non-trivial inequalities in (\ref{eq:genconstraint}) are of the form $(J=\emptyset, R)$. Fix a subset $\emptyset  \neq R \subsetneq V$ and let $ \mathcal{F}$ be the family of connected components in $G[R]$. Each connected component of $G[R]$ has a unique pair of edges in $\delta(R)$. The inequality corresponding to $(J=\emptyset, R)$ holds since    
    \begin{equation*}
        \sum_{i=1}^{k} x(\delta_{T_i}(R)) = k \cdot x (\delta(R)) \ge 2k|\mathcal{F}|/2k\geq 1.
    \end{equation*}

\end{proof}

\section{Min Cardinality $k$-tier FGC}
In this section, we design a $(\alpha_{q_1}+2(q_k-q_1)/q_1)$-approximation for min-cardinality $k$-tier FGC and prove Theorem \ref{thm:cardinality}.
We will first define a series of useful constructs that will be used in our algorithm.
\begin{definition}[Set Families and Covers]\label{def:setfamiliescovers}
    Let $G=(V,E)$ with nested family $\mathcal{T}= (T_1,T_2,\ldots,T_k)$ and requirements $\mathcal{Q}= (q_1,q_2,\ldots,q_k)\in \mathbb{Z}_{\ge 0}^k$, where $T_1\subseteq T_2\subseteq \ldots\subseteq T_k=E$ and $q_1\leq q_2\leq \ldots\leq q_k$. Let $F\subseteq E$. 
    \begin{enumerate}
        \item We define $\mathcal{Q}_{q_1}^1\coloneqq (q_1,q_1\ldots, q_1)$ and for every $i\in [k]\backslash \{1\}$ and $j\in [q_i-q_{i-1}]$, we define the following:
        \begin{align*}
            \mathcal{Q}^i_j&\coloneqq (q_1, q_2, \ldots, q_{i-1}, q_{i-1}+j, q_{i-1} + j, q_{i-1} + j,\ldots, q_i+j) \text{ and}\\
            H_j^i(F)&\coloneqq \left\{\emptyset\neq R\subsetneq V:\text{$R$ is $(\mathcal{T},\mathcal{Q}^i_{j-1})$-safe in $(V,F)$ and $(\mathcal{T},\mathcal{Q}^i_{j})$-unsafe in $(V,F)$}\right\}.
        \end{align*}
        \item Let $\mathcal{C}\subseteq 2^V\setminus \{\emptyset,V\}$. An \emph{$F$-excluding $\mathcal{C}$-cover} is a subset $C\subseteq E\setminus F$ with $|\delta(R)\cap C|\geq 1$ for every $R\in \mathcal{C}$. An inclusion-wise minimal $F$-excluding $\mathcal{C}$-cover is an $F$-excluding $\mathcal{C}$-cover $C$ if $C\setminus \{e\}$ is not an $F$-excluding  $\mathcal{C}$-cover for every $e\in C$. 
    \end{enumerate}

\end{definition}
We have the following observations based on the above definitions.
\begin{proposition}\label{prop:mincoverforest}
    Let $G=(V, E)$, $F\subseteq E$, and $\mathcal{C}\subseteq 2^V\setminus \{\emptyset,V\}$. Every inclusion-wise minimal $F$-excluding $\mathcal{C}$-cover is acyclic.
\end{proposition}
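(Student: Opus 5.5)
We argue by contradiction, using the standard fact that an edge lying on a cycle never crosses a cut by itself. Let $C$ be an inclusion-wise minimal $F$-excluding $\mathcal{C}$-cover, and suppose $C$ contains a cycle $Z$. We will exhibit an edge $e\in Z$ such that $C\setminus\{e\}$ is still an $F$-excluding $\mathcal{C}$-cover, which contradicts minimality. Since $C\setminus\{e\}\subseteq E\setminus F$ automatically, only the covering condition needs to be checked.

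The key step is the parity fact: for every $\emptyset\neq R\subsetneq V$ and every cycle $Z$, $|\delta(R)\cap Z|$ is even. To see this, traverse $Z$ once. Each crossing of $\delta(R)$ switches the walk between $R$ and $V\setminus R$, and the walk returns to its starting vertex, so the number of crossings is even.

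Now fix any $e\in Z$ and any $R\in\mathcal{C}$. If $e\notin\delta(R)$, then $\delta(R)\cap (C\setminus\{e\})=\delta(R)\cap C$, which has at least one edge because $C$ covers $R$. If $e\in\delta(R)$, then $|\delta(R)\cap Z|\ge 1$, and by parity $|\delta(R)\cap Z|\ge 2$. Hence there is some $f\in Z\setminus\{e\}$ with $f\in\delta(R)\cap (C\setminus\{e\})$. In both cases $R$ remains covered, so $C\setminus\{e\}$ is an $F$-excluding $\mathcal{C}$-cover, contradicting minimality. Therefore $C$ is acyclic.

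No step is a real obstacle. The only point needing care is the parity fact; the traversal argument above handles it.
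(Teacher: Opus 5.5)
Your proof is correct and follows the same route as the paper: assume a minimal cover contains a cycle, remove an arbitrary cycle edge $e$, and show that every $R\in\mathcal{C}$ with $e\in\delta(R)$ is still covered by another edge of the cycle. The paper simply asserts that this second crossing edge exists, whereas your parity argument (that $|\delta(R)\cap Z|$ is even) supplies the justification explicitly.
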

\begin{proof}
    Let $C$ be an inclusion-wise minimal $F$-excluding $\mathcal{C}$-cover. For the sake of contradiction, suppose $C$ contains a cycle $C'$. Fix an arbitrary edge $e$ in the cycle $C'$. Then, for every $R\in \mathcal{C}$ such that $e\in \delta(R)$, there exists $e'\in C'-e$ such that $e'\in \delta(R)$. Thus, $C-e$ is also an $F$-excluding $\mathcal{C}$-cover, a contradiction to the inclusion-wise minimal property of $C$.
\end{proof}

\begin{lemma}\label{lem:violated} 
   Let $G=(V,E)$ with nested family $\mathcal{T}= (T_1,T_2,\ldots,T_k)$ and requirements $\mathcal{Q}= (q_1,q_2,\ldots,q_k)\in \mathbb{Z}_{\ge 0}^k$, where $T_1\subseteq T_2\subseteq \ldots\subseteq T_k=E$ and $q_1\leq q_2\leq \ldots\leq q_k$. Let $F\subseteq E$. For every $R\in H_j^{i}(F)$, the following two properties hold:
    \begin{enumerate}
        \item $|\delta_{T_a}(R)\cap F|< q_a$ for all $a\in [i-1]$, $|\delta_{T_b}(R)\cap F|\le q_{i-1}+j-1$ for all $b\in \{i, i+1, \ldots, k\}$, 
        \item $|\delta(R)\cap F|=q_{i-1}+j-1$ and moreover, there exists $b\in \{i,i+1,\ldots, k\}$ such that $|\delta_{T_b}(R)\cap F|=q_{i-1}+j-1$.
    \end{enumerate}
\end{lemma}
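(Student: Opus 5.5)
The proof reads both properties directly off the two defining conditions of $H^i_j(F)$ and uses only the nesting $T_1\subseteq\dots\subseteq T_k=E$. I interpret $\mathcal{Q}^i_j$ as the vector whose entries indexed by $a\in[i-1]$ equal $q_a$ and whose entries indexed by $b\in\{i,\dots,k\}$ all equal $q_{i-1}+j$. The final entry ``$q_i+j$'' in the displayed definition appears to be a typo for $q_{i-1}+j$, and I will say so explicitly, since the argument depends on all tail entries being equal. The same formula with $j-1$ in place of $j$ defines $\mathcal{Q}^i_{j-1}$, including $\mathcal{Q}^i_0$ when $j=1$.

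\emph{Step 1: property 1 from unsafeness.} Fix $R\in H^i_j(F)$. Since $R$ is $(\mathcal{T},\mathcal{Q}^i_j)$-unsafe in $(V,F)$, no tier meets its requirement under $\mathcal{Q}^i_j$. For $a\in[i-1]$ this gives $|\delta_{T_a}(R)\cap F|<q_a$. For $b\ge i$ it gives $|\delta_{T_b}(R)\cap F|<q_{i-1}+j$, and by integrality $|\delta_{T_b}(R)\cap F|\le q_{i-1}+j-1$.

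\emph{Step 2: property 2 from safeness.} Since $R$ is $(\mathcal{T},\mathcal{Q}^i_{j-1})$-safe, some index $t\in[k]$ satisfies $|\delta_{T_t}(R)\cap F|\ge(\mathcal{Q}^i_{j-1})_t$.
\begin{itemize}
\item An index $t=a<i$ is impossible. Its requirement is $q_a$ in both vectors, so it would contradict Step 1.
\item Hence $t=b$ for some $b\ge i$, with $|\delta_{T_b}(R)\cap F|\ge q_{i-1}+j-1$.
\item Combined with the upper bound from Step 1, this gives the equality $|\delta_{T_b}(R)\cap F|=q_{i-1}+j-1$.
\end{itemize}

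\emph{Step 3: the full cut.} Because the tiers are nested and $T_k=E$, we have
\[
q_{i-1}+j-1=|\delta_{T_b}(R)\cap F|\le|\delta_{T_k}(R)\cap F|=|\delta(R)\cap F|.
\]
Step 1 applied with $b=k$ gives $|\delta(R)\cap F|\le q_{i-1}+j-1$. Hence $|\delta(R)\cap F|=q_{i-1}+j-1$, which completes property 2.

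The only real subtlety is fixing the intended reading of $\mathcal{Q}^i_j$, including the boundary case $j=1$, where $\mathcal{Q}^i_0$ has all tail entries equal to $q_{i-1}$. The case $i=1$ is not covered by the definition of $H^i_j$. If it is intended, it follows by the same argument with the convention $q_0:=0$, in which case the set $[i-1]$ of indices in Steps 1 and 2 is empty.
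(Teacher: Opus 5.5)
Your proof is correct and follows the paper's argument: unsafeness under $\mathcal{Q}^i_j$ gives property 1, safeness under $\mathcal{Q}^i_{j-1}$ forces some tail index $b\ge i$ to reach $q_{i-1}+j-1$, and the nesting with $T_k=E$ then fixes $|\delta(R)\cap F|=q_{i-1}+j-1$. Your reading of the final entry of $\mathcal{Q}^i_j$ as $q_{i-1}+j$ (a typo for $q_i+j$) is also the one the paper's proof implicitly relies on.
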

\begin{proof}
    Let $R\in H_j^i(F)$. 
    Since $R$ is $(\mathcal{T},\mathcal{Q}_{j}^i)$-unsafe in $(V,F)$, we have that $|\delta_{T_a}(R)\cap F|<q_a$ for all $a\in [i-1]$ and $|\delta_{T_b}(R)\cap F|<q_{i-1}+j$ for all $b\in \{i,i+1,\ldots, k\}$. This implies the first property. Also, $R$ is $(\mathcal{T},\mathcal{Q}_{j-1}^i)$-safe in $(V,F)$. Since $|\delta_{T_a}(R)\cap F|<q_a$ for all $a\in [i-1]$, it follows that there exists $b\in \{i,i+1,\ldots,k\}$ such that $|\delta_{T_b}(R)\cap F|\geq q_{i-1}+j-1$. Since $T_1\subseteq T_2\subseteq\ldots\subseteq T_k=E$, this implies that $q_{i-1}+j-1=|\delta_{T_b}(R)\cap F|\leq |\delta_{T_k}(R)\cap F|<q_{i-1}+j$, and hence, $|\delta(R)\cap F|=q_{i-1}+j-1$.
\end{proof}

We now show that for a subset $F\subseteq E$ such that $(V, F)$ is $(\mathcal{T}, \mathcal{Q}^i_{j-1})$-FGC, all sets in the family $H^i_j(F)$ can be enumerated in polynomial time and moreover an $F$-excluding $H^{i}_{j}(F)$ cover can be computed in polynomial time.
    
\begin{proposition}\label{lem:polyviolated}
    There exists an algorithm that 
    takes as input a graph $G=(V,E)$ with nested family $\mathcal{T}= (T_1,T_2,\ldots,T_k)$, requirements $\mathcal{Q}= (q_1,q_2,\ldots,q_k)\in \mathbb{Z}_{\ge 0}^k$, where $T_1\subseteq T_2\subseteq \ldots\subseteq T_k=E$ and $q_1\leq q_2\leq \ldots\leq q_k$ for some fixed constant $k$, $i\in [k]\setminus\{1\}$, $j\in [p_i]$, and $F\subseteq E$ such that $(V, F)$ is $(\mathcal{T}, \mathcal{Q}^i_{j-1})$-FGC, and 
    runs in polynomial time to verify if there exists an $F$-excluding $H^i_j(F)$-cover and if so, compute a minimal $F$-excluding $H^i_j(F)$-cover.
\end{proposition}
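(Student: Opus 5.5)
The plan is to enumerate $H^i_j(F)$ explicitly, using the fact that all its members are near-minimum cuts of a suitable multiobjective min-cut instance. Once $H^i_j(F)$ is listed, both covering questions become simple.

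\textbf{Step 1: sets in $H^i_j(F)$ are multiobjective min-cuts.} Mirroring the proof of Lemma \ref{lem:verification-to-multiobj-min-cut}, write $\mathcal{Q}^i_j=(r_1,\ldots,r_k)$. Define costs on $F$ by $c_a(e)=1/r_a$ for $e\in F\cap T_a$ and $0$ otherwise, for $a\in[k]$, and run the multiobjective global min-cut of Lemma \ref{lem:multiobj-min-cut} on $(V,F)$.

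Since $(V,F)$ is $(\mathcal{T},\mathcal{Q}^i_{j-1})$-FGC, every cut $R$ has some $a$ with $|\delta_{T_a}(R)\cap F|\ge r_a$ for $a<i$, or $|\delta_{T_b}(R)\cap F|\ge q_{i-1}+j-1$ for some $b\ge i$. Consequently every cut has objective value at least $\min\{1,(q_{i-1}+j-1)/(q_{i-1}+j)\}=(q_{i-1}+j-1)/(q_{i-1}+j)$.

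By Lemma \ref{lem:violated}, each $R\in H^i_j(F)$ satisfies the following:
\begin{itemize}
\item $|\delta_{T_a}(R)\cap F|/r_a<1$ for all $a<i$;
\item $|\delta_{T_b}(R)\cap F|\le q_{i-1}+j-1$ for all $b\ge i$, with equality for some $b$.
\end{itemize}
For $a<i$ the ratio $|\delta_{T_a}(R)\cap F|/q_a$ is at most $(q_a-1)/q_a\le (q_{i-1}-1)/q_{i-1}$. This is at most $(q_{i-1}+j-1)/(q_{i-1}+j)$ since $j\ge1$. Hence $\max_a c_a(\delta(R))=(q_{i-1}+j-1)/(q_{i-1}+j)=:\lambda$ exactly.

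So if $H^i_j(F)\neq\emptyset$, the optimum equals $\lambda$, and $H^i_j(F)$ is contained in the set of optimal cuts. If $\lambda>0$, Lemma \ref{lem:multiobj-min-cut} enumerates these $O(n^{2k})$ sets in polynomial time. We then filter them by directly checking $(\mathcal{T},\mathcal{Q}^i_{j-1})$-safety and $(\mathcal{T},\mathcal{Q}^i_j)$-unsafety for each listed set.

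\textbf{Degenerate case $\lambda=0$.} Here $q_{i-1}+j-1=0$, so $F$ has no edge crossing any $R\in H^i_j(F)$. I expect this to be the main obstacle, because the number of optimal cuts is no longer polynomially bounded. I would treat it separately. In this case $H^i_j(F)$ consists exactly of the cuts with $\delta_F(R)=\emptyset$, i.e., the unions of components of $(V,F)$; membership conditions for $a<i$ hold automatically since $|\delta_{T_a}(R)\cap F|=0<q_a$ whenever $q_a\ge1$. So the problem reduces to the following: a cover of all unions of components is exactly an edge set that connects the components of $(V,F)$. Existence is then a connectivity check of $G$, and a minimal cover is a spanning forest of $G/F$ restricted to $E\setminus F$.

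\textbf{Step 2: covering, in the nondegenerate case.} An $F$-excluding $H^i_j(F)$-cover exists iff $C=E\setminus F$ works, i.e., iff every listed $R$ has $\delta(R)\setminus F\neq\emptyset$. This is checkable over the polynomial list. For minimality, start from $C=E\setminus F$ and greedily delete each edge $e$ whenever $C-e$ still covers every listed set. The covering property is monotone under supersets, so one pass yields an inclusion-wise minimal cover. This runs in polynomial time for fixed $k$.
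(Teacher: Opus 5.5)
Your main-case argument is correct and is essentially the paper's proof. You enumerate $H^i_j(F)$ as multiobjective min-cuts of $F$-supported capacities via Lemma~\ref{lem:multiobj-min-cut}, filter the list, test $C=E\setminus F$, and prune greedily. The only difference is normalization. The paper divides by the entries of $\mathcal{Q}^i_{j-1}$, so every cut has value at least $1$ and each $R\in H^i_j(F)$ has value exactly $1$; this avoids your comparison $(q_a-1)/q_a\le (q_{i-1}+j-1)/(q_{i-1}+j)$. Dividing by $\mathcal{Q}^i_j$ instead has a small advantage: the cuts of value exactly $\lambda$ are precisely $H^i_j(F)$, so your filtering step is actually redundant.

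Your degenerate-case paragraph, however, is wrong.
\begin{itemize}
\item If $q_{i-1}+j-1=0$, then $q_1=\cdots=q_{i-1}=0$. Since $i\ge 2$, tier $a=1<i$ has requirement $0$ in $\mathcal{Q}^i_j$, so every cut satisfies $|\delta_{T_1}(R)\cap F|\ge 0=q_1$ and is $(\mathcal{T},\mathcal{Q}^i_j)$-safe. Hence $H^i_j(F)=\emptyset$.
\item Your claim that the conditions for $a<i$ ``hold automatically'' needs $q_a\ge 1$. That fails for every $a<i$ in exactly this case, so $H^i_j(F)$ is not the family of unions of components of $(V,F)$.
\item As a result, your algorithm errs here. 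The unique inclusion-wise minimal $F$-excluding $H^i_j(F)$-cover is $\emptyset$. Your spanning forest of $G/F$ is nonempty as soon as some edge of $E\setminus F$ joins two components of $(V,F)$. Your connectivity test would also wrongly report that no cover exists when $G$ is disconnected, which is a feasible input when $q_1=0$.
\item The same oversight affects Step~1 more broadly: your costs $1/r_a$ are undefined whenever $q_1=0$, even if $q_{i-1}\ge 1$.
\end{itemize}

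The repair is a one-line case split. If $q_1=0$, then $H^i_j(F)=\emptyset$ and you return $\emptyset$. Otherwise all $q_a\ge 1$, your costs are well defined, and the optimum is at least $\lambda>0$, so Lemma~\ref{lem:multiobj-min-cut} applies. The paper's proof silently assumes $q_1\ge 1$, since its capacities $1/(q_{i-1}+j-1)$ are otherwise undefined. That assumption is natural given the $1/q_1$ in Theorem~\ref{thm:cardinality}. So the ``main obstacle'' you anticipated is vacuous rather than hard.
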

\begin{proof}
    We will show below that the family $H^i_j(F)$ can be enumerated in time $O(|E|^2|V|^{2k})$. After enumerating all these sets, we can verify whether $E\setminus F$ is an $F$-excluding $H^i_j(F)$-cover in time $O(|E|^3|V|^{2k})$ by verifying whether $|\delta(R)\cap (E\setminus F)|\ge 1$ for every $R\in H^i_j(F)$. If so, then we can compute an inclusionwise minimal $F$-excluding $H^i_j(F)$-cover by iteratively verifying whether each edge of $E\setminus F$ is necessary to maintain the $F$-excluding $H^i_j(F)$-cover property. The time per iteration is $O(|E|^3|V|^{2k})$ and hence, the overall time is $O(|E|^4|V|^{2k})$. 

    We now show that the family $H^i_j(F)$ can be enumerate in time $O(|E||V|^{2k})$. We consider the graph $G$ with $k$ edge-capacities $u_1, u_2, \ldots, u_k:E\rightarrow \mathbb{R}_{\ge 0}$ given by 
    \[
    u_a(e) \coloneqq
    \begin{cases}
        \frac{1}{q_a} &\text{ if } e\in T_a\cap F, a\in [i-1],\\
        \frac{1}{q_{i-1}+j-1} &\text{ if }e\in T_a\cap F, a\in \{i, i+1, \ldots, k\},\\
        0 &\text{ otherwise.}
    \end{cases}
    \]
    We now show the following two properties:
    \begin{enumerate}
        \item the multiobjective min-cut value in graph $G$ with edge capacities $u_1, u_2, \ldots, u_k$ is at least $1$, i.e., 
    $\min_{\emptyset \neq R\subsetneq V}\max_{a\in [k]} u_a(\delta(R)) \ge 1$ and 
        \item if $R\in H^i_j(F)$, then $R$ is a multiobjective min-cut in graph $G$ with edge capacities $u_1, u_2, \ldots, u_k$ with the objective value being exactly $1$, i.e., $\max_{a\in [k]} u_a(\delta(R)) = 1$. 
    \end{enumerate}
    Firstly, suppose $\emptyset\neq R\subsetneq V$ such that $R\not\in H^i_j(F)$. Since $(V, F)$ is $(\mathcal{T}, \mathcal{Q}^i_{j-1})$-FGC but $R\not\in H^i_j(F)$, it follows that $R$ is $(\mathcal{T}, \mathcal{Q}^i_{j})$-safe in $(V,F)$. Hence, either there exists $a\in [i-1]$ such that $u_a(\delta(R))\ge 1$ or there exists $b\in \{i, i+1, \ldots, k\}$ such that $u_b(\delta(R))>1$ and consequently, $\max_{a\in [k]} u_a(\delta(R)) \ge 1$. 
    Next, suppose $R\in H^i_j(F)$. By Lemma \ref{lem:violated}, we have that $u_a(\delta_{T_a}(R))<1$ for all $a\in [i-1]$, $u_b(\delta_{T_b}(R))\le 1$ for all $b\in \{i, i+1, \ldots, k\}$, and $u_k(\delta_{T_k}(R))=1$, and hence, $\max_{a\in [k]} u_a(\delta(R)) = 1$.  From these two, it follows that $\min_{\emptyset \neq R\subsetneq V}\max_{a\in [k]} u_a(\delta(R)) \ge 1$ and moreover, if $R\in H^i_j(F)$, then $\max_{a\in [k]} u_a(\delta(R)) = 1$. 

    Therefore, in order to enumerate $H^i_j(F)$, we verify whether $\min_{\emptyset \neq R\subsetneq V}\max_{a\in [k]} u_a(\delta(R)) = 1$ and if so, then 
    it suffices to enumerate all multiobjective min-cuts in graph $G$ with edge capacities $u_1, u_2, \ldots, u_k$ and verify whether $R\in H^i_j(F)$ for each such enumerated cut $R$. By Lemma \ref{lem:multiobj-min-cut}, the number of multiobjective min-cuts in a graph $G$ with $k$ edge capacity functions is $O(|V|^{2k})$ and they can all be enumerated in time $O(|E||V|^{2k})$. The time to verify whether a given $\emptyset\neq R\subsetneq V$ is in $H^i_j(F)$ is $O(k|E|)$. Hence, the time to enumerate $H^i_j(F)$ is $O(|E|^2|V|^{2k})$. 
\end{proof}

We now prove Theorem \ref{thm:cardinality}.
\begin{proof}[Proof of Theorem \ref{thm:cardinality}]
We consider the following algorithm:
    \begin{algorithm}[H]
    \caption{Unit Cost $k$-Tier FGC Algorithm}
    \label{alg:Unit-Cost}
    \begin{algorithmic}[1]
    \State \textbf{Input: } $G=(V,E)$, $T_1\subseteq T_2\subseteq\ldots\subseteq T_k=E$, $0=q_0\leq q_1\leq q_2\leq\ldots\leq q_k$
    \State \textbf{Output: } A subset $F\subseteq E$ of edges
    \State $C^1_{q_1}\gets \alpha_{q_1}$-approximate solution to max-sized $q_1$-ECSS on instance $(V, E)$
    \State $F\gets C^1_{q_1}$
    \For{$i=2$ to $k$}
        \For{$j=1$ to $q_i-q_{i-1}$}
            \State $C^i_j\gets$ inclusion-wise minimal $F$-excluding $H_j^i(F)$-cover
            \State $F\gets F\cup C^i_j$
        \EndFor
    \EndFor
    \State \Return $F$
    \end{algorithmic}
    \end{algorithm}
    
We note that $F$-excluding $H^i_j(F)$-cover exists for every choice of $F$ since we may assume that the input is feasible and hence, $G$ itself is $(\mathcal{T}, \mathcal{Q})$-FGC.

We now prove the correctness of Algorithm \ref{alg:Unit-Cost} by showing that $(V, F)$ is $(\mathcal{T}, \mathcal{Q})$-FGC. For each $i\in [k]\backslash \{1\}$, we define $C_0^i:=C_{q_{i-1}-q_{i-2}}^{i-1}$. 
First, we show for each $i\in [k]$, after adding $C^i_{q_i-q_{i-1}}$ to $F$, the subgraph $(V,F)$ is $(\mathcal{T},\mathcal{Q}^i_{q_i-q_{i-1}})$-FGC by induction on $i$. For the base case of $i=1$, $C^1_{q_1}$ is a $q_1$-ECSS subgraph of $(V, E)$, which is $(\mathcal{T},\mathcal{Q}_{q_1}^{1})$-FGC. Suppose the induction hypothesis is true for $i\geq 1$. We prove the inductive statement for $i+1$ next. 

For this, we prove that for each $j\in \{0\}\cup[q_{i+1}-q_i]$, after adding $C_j^{i+1}$ to $F$, the subgraph $(V,F)$ is $(\mathcal{T},\mathcal{Q}_j^{i+1})$-FGC. We will prove this by induction on $j$. For the base case of $j=0$, we have that after adding $C^{i+1}_0$ to $F$ with $C^{i+1}_0=C_{q_i-q_{i-1}}^i$, the subgraph $(V,F)$ is $(\mathcal{T},\mathcal{Q}_{q_i-q_{i-1}}^i)$-FGC by the induction hypothesis on $i$, which is equivalent to the subgraph $(V,F)$ being $(\mathcal{T},\mathcal{Q}_0^{i+1})$-FGC by part 1 of definition \ref{def:setfamiliescovers}. Suppose the induction hypothesis is true for $j\geq 0$. We prove the inductive statement for $j+1$ next. After adding $C_j^{i+1}$ to $F$ from the inductive hypothesis on $j$, the subgraph $(V,F)$ is $(\mathcal{T},\mathcal{Q}_j^{i+1})$-FGC. If the subgraph $(V,F)$ is $(\mathcal{T},\mathcal{Q}_{j+1}^{i+1})$-FGC, then we are done. Suppose that the subgraph $(V,F)$ is not $(\mathcal{T},\mathcal{Q}_{j+1}^{i+1})$-FGC. Consequently, there exists $\emptyset\neq R\subsetneq V$ such that $R$ is $(\mathcal{T},\mathcal{Q}_{j+1}^{i+1})$-unsafe in $(V,F)$, and hence $H_{j+1}^{i+1}(F)$ is nonempty. Let $R\in H_{j+1}^{i+1}(F)$. Then, by the second part of Lemma \ref{lem:violated}, $|\delta(R)\cap F|=q_i+j$. In addition, by construction, $|\delta(R)\cap C_{j+1}^{i+1}|\geq 1$, meaning $|\delta(R)\cap (F\cup C_{j+1}^{i+1})|\geq q_i+j+1$. Therefore, $R$ is $(\mathcal{T},\mathcal{Q}_{j+1}^{i+1})$-safe in $(V,F\cup C_{j+1}^{i+1})$, and this is true for all $R\in H_{j+1}^{i+1}(F)$, meaning the graph $(V, F\cup C_{j+1}^{i+1})$ is $(\mathcal{T},\mathcal{Q}_{j+1}^{i+1})$-FGC, completing the inner induction on $j$. 
Applying this conclusion for $j=q_{i+1}-q_i$ proves the outer induction statement for $i+1$. 

We now bound the approximation factor of Algorithm \ref{alg:Unit-Cost}. Let $F^*\subseteq E$ be an optimum solution and let $OPT = |F^*|$. Because $(V,F^*)$ must also be $q_1$-ECSS, for each $v\in V$, we have that $deg_{F^*}(v)\geq q_1$. Thus, by the handshaking lemma, $OPT\geq \frac{q_1|V|}{2}$. Because $C_{q_1}^1$ is an $\alpha_{q_1}$-approximate solution to max-sized $q_1$-ECSS on $(V,E)$, $|C_{q_1}^1|\leq \alpha_{q_1}\cdot OPT$. For all $i\in [k]\setminus \{1\}$ and $j\in [q_i-q_{i-1}]$, since $C_j^i$ is forest by Proposition \ref{prop:mincoverforest}, we have that 
\begin{equation*}
    |C_j^i|\leq |V|-1<|V|\leq \frac{2OPT}{q_1}.
\end{equation*}
Thus, summing over all $i\in [k]\setminus \{1\}$ and $j\in [q_i-q_{i-1}]$, we obtain
\begin{equation*}
    |F|=|C_1|+\sum_{i=2}^k\sum_{j=1}^{q_i-q_{i-1}}|C_j^i|<\alpha_{q_1} \cdot OPT+\sum_{i=2}^k\frac{2(q_i-q_{i-1})}{q_1}\cdot OPT = \left(\alpha_{q_1} + \frac{2(q_k-q_1)}{q_1}\right)\cdot OPT,
\end{equation*}
giving us the desired approximation ratio.

By Proposition \ref{lem:polyviolated}, finding an inclusion-wise minimal $F$-excluding $H_j^i(F)$-cover can be done in polynomial-time for each $i\in [k]\setminus \{1\}$ and $j\in [q_i-q_{i-1}]$. In addition, finding an $\alpha$-approximate $q_1$-ECSS also takes polynomial-time, thus proving the result.
\end{proof}

\section{$k$-tier FMGC}
In this section, we obtain a $2$-approximation for $k$-tier FMGC and prove Theorem \ref{thm:multi-use}. We recall that in $k$-tier FMGC, the input is a graph $G=(V, E)$ with non-negative edge costs $c: E\rightarrow \R_{\ge 0}$, a nested family of edge-tiers $T_1\subseteq T_2 \subseteq \ldots T_k = E$, and non-negative integral tier requirements $q_1 \le q_2 \le \ldots q_k$. The goal is to find a non-negative integral vector $x\in \Z^E_{\ge 0}$ minimizing $\sum_{e\in E}c_e x_e$ such that for every $\emptyset\neq R\subsetneq V$, there exists $i\in [k]$ with $x(\delta(R)\cap T_i)\ge q_i$. We assume throughout that $q_1>0$, otherwise $x=0$ is an optimum solution.  

For an input instance $(G=(V, E), c: E\rightarrow \R_{\ge 0}, \mathcal{T}=(T_1, T_2, \ldots, T_k), \mathcal{Q}=(q_1, q_2, \ldots, q_k))$, let $S_i:=T_i\setminus T_{i-1}$ for every $i\in \{1, 2, \ldots, k\}$ with $T_0:=\emptyset$. We observe that $S_1, \ldots, S_k$ is a partition of $E$. We consider the following IP: 
\begin{align*}\tag{IP$_{\text{FMGC-rel}}$}\label{eq:IP-FMGC-rel}
        \min \quad &\sum_{e\in E}c_ex_e\\
        \text{s.t.}\quad &\sum_{i=1}^k\frac{x(\delta_{S_i}(R))}{q_i}\geq 1
        &\forall \emptyset\neq R\subsetneq V\\
        &x_e\geq 0
        &\forall e\in E\\
        &x\in \mathbb{Z}^E.
    \end{align*}

We note that \ref{eq:IP-FMGC-rel} does not formulate $k$-tier FMGC (see Simmons' thesis \cite{simmons-thesis} for an example for the case of $k=2$). However, we show that every feasible solution to $k$-tier FMGC is feasible for \ref{eq:IP-FMGC-rel}---see Lemma \ref{lem:FMGC-feasibility}. 
\begin{lemma}\label{lem:FMGC-feasibility}
    Let $x\in \Z^E$ be a feasible solution for $k$-tier FMGC. Then, $x$ is feasible to \eqref{eq:IP-FMGC-rel}. 
\end{lemma}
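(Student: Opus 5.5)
Fix $\emptyset\neq R\subsetneq V$. The plan is to show $\sum_{i=1}^k x(\delta_{S_i}(R))/q_i\ge 1$ directly from the tier that makes $R$ safe; integrality and non-negativity of $x$ are inherited, so nothing else needs checking. We may assume $q_1>0$ as stated at the start of the section, so all $q_i\ge q_1>0$ and the constraint is well defined.

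Since $x$ is feasible for $k$-tier FMGC, there is $t\in[k]$ with $x(\delta(R)\cap T_t)\ge q_t$. Using that $S_1,\dots,S_t$ partition $T_t$, rewrite this as
\[
\sum_{i=1}^t x(\delta_{S_i}(R))\ \ge\ q_t .
\]
For each $i\le t$ we have $q_i\le q_t$, so $x(\delta_{S_i}(R))/q_i\ge x(\delta_{S_i}(R))/q_t$, using $x\ge 0$. Terms with $i>t$ are non-negative and can be dropped. Hence
\[
\sum_{i=1}^k \frac{x(\delta_{S_i}(R))}{q_i}\ \ge\ \sum_{i=1}^t \frac{x(\delta_{S_i}(R))}{q_t}\ =\ \frac{x(\delta_{T_t}(R))}{q_t}\ \ge\ 1 .
\]

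The only point needing care is the direction of the monotonicity. The earlier tiers carry smaller requirements, so dividing by $q_i$ instead of $q_t$ only increases each term. The argument is otherwise a one-line chain, so I expect no real obstacle.
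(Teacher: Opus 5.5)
Your proof is correct and follows the paper's argument. You pick the tier $t$ that makes $R$ safe, drop the non-negative terms with $i>t$, and use $q_i\le q_t$ for $i\le t$ together with $T_t=S_1\cup\dots\cup S_t$ to conclude $\sum_i x(\delta_{S_i}(R))/q_i\ge x(\delta_{T_t}(R))/q_t\ge 1$. You also spell out the monotonicity step and the $q_1>0$ assumption, which the paper's one-line chain leaves implicit.
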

\begin{proof}
    Let $\emptyset\neq R\subsetneq V$. Since $x$ is feasible for $k$-tier FMGC, there eixsts $j\in [k]$ such that $x(\delta_{T_j}(R))\ge q_j$. Since 
    \begin{align*}
        T_1\subseteq T_2\subseteq &\ldots \subseteq T_k \quad\text{and}\\
        q_1\leq q_2\leq &\ldots \leq q_k, 
    \end{align*}
    we have that, 
    \begin{align*}
        \sum_{i=1}^k\frac{x(\delta_{S_i}(R))}{q_i}\geq \frac{x(\delta_{T_j}(R))}{q_j}+\sum_{i=j+1}^k\frac{x(\delta_{S_i}(R))}{q_i}\geq \frac{x(\delta_{T_j}(R))}{q_j} \ge 1.
    \end{align*}
\end{proof}

We have the following proposition concerning the LP-relaxation of \ref{eq:IP-FMGC-rel}.
\begin{proposition}\label{prop:FMGC-lp-rel}
    The LP-relaxation of \ref{eq:IP-FMGC-rel} can equivalently be written as follows:
    \begin{align*}\tag{$\text{LP}_{\text{FMGC-rel}}$}\label{eq:LP-FMGC-rel}
        \min \quad &\sum_{i=1}^k\sum_{e\in S_i}(q_i\cdot c_e)y_e\\
        y(\delta(R))&\geq 1\ \quad\forall\ \emptyset\neq R\subsetneq V\\
        y_e&\geq 0\ \quad\forall\ e\in E.
    \end{align*}
    Moreover, for $w_e:=q_i c_e$ for each $e\in S_i$ for each $i\in [k]$, the minimum $w$-weight of a spanning tree in $G$ is at most twice the optimum objective value of the above LP. 
\end{proposition}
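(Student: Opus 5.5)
The plan is to handle the two claims separately. The first is a change of variables; the second is the classical factor-$2$ bound between the cut relaxation and spanning trees.

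\textbf{Equivalence of the two LPs.} For $x\in\R^E_{\ge 0}$ define $y_e:=x_e/q_i$ for each $e\in S_i$ and $i\in[k]$. This is legitimate because $q_1>0$ by assumption and $q_i\ge q_1$. Since $S_1,\ldots,S_k$ partition $E$, for every $\emptyset\neq R\subsetneq V$ we have
\[
\sum_{i=1}^k \frac{x(\delta_{S_i}(R))}{q_i}=y(\delta(R)).
\]
Hence the cut constraints of the LP-relaxation of \ref{eq:IP-FMGC-rel} hold for $x$ if and only if $y(\delta(R))\ge 1$ holds for all $R$. Nonnegativity transfers in both directions, and
\[
\sum_e c_e x_e=\sum_i\sum_{e\in S_i} q_i c_e y_e .
\]
The map $x\mapsto y$ is a bijection of $\R^E_{\ge 0}$, so the two LPs have the same optimum and their feasible solutions correspond.

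\textbf{Spanning-tree bound.} The displayed LP is the undirected cut relaxation with weights $w\ge 0$. I will compare it with the subtour (partition) formulation of the spanning tree polytope:
\[
\min\; w^\top z \quad\text{s.t.}\quad z(\delta(\mathcal{P}))\ge |\mathcal{P}|-1 \text{ for every partition } \mathcal{P} \text{ of } V,\quad z\ge 0.
\]
Here $\delta(\mathcal{P})$ is the set of edges joining different parts. It is a standard fact (Fulkerson; Nash-Williams/Tutte) that this polyhedron is the dominant of the spanning tree polytope, so its minimum equals the minimum $w$-weight of a spanning tree. This is the one nontrivial fact I will cite rather than reprove.

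Given a feasible $y$ of the cut LP, set $z:=2y$. For a partition $\mathcal{P}=\{P_1,\ldots,P_r\}$ with $r\ge 2$, each edge of $\delta(\mathcal{P})$ lies in exactly two of the cuts $\delta(P_1),\ldots,\delta(P_r)$. Therefore
\[
y(\delta(\mathcal{P}))=\tfrac12\sum_{t=1}^{r} y(\delta(P_t))\ge \tfrac r2,
\qquad\text{so}\qquad
z(\delta(\mathcal{P}))\ge r>r-1.
\]
The trivial partition with $r=1$ imposes nothing. So $2y$ is feasible for the partition LP, and hence
\[
\min_T w(T)\le 2\,w^\top y .
\]
Taking $y$ optimal gives the claim.

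\textbf{Main obstacle.} The only delicate point is invoking the integrality of the partition formulation. If a self-contained argument is wanted, I would instead run the standard primal-dual or Kruskal argument: run Kruskal on $w$ and build a dual solution to the partition LP whose value equals the tree weight. The duals are the increases in the Kruskal threshold at which the number of components drops, placed on the current component partition. These duals are nonnegative because $w\ge 0$ and the thresholds are nondecreasing.
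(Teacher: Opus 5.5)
Your proof is correct and follows the paper's route: the substitution $y_e=x_e/q_i$ gives the equivalence, and the factor $2$ is the integrality gap of the cut relaxation for spanning trees. The only difference is that the paper cites this gap bound directly (Vazirani, Theorem 22.9), whereas you derive it by doubling $y$ into the partition formulation and invoking that formulation's integrality, which is the standard proof of the cited fact.
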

\begin{proof}
The LP-relaxation of \ref{eq:IP-FMGC-rel} is equivalent to \ref{eq:LP-FMGC-rel} by a transformation of variables: set $y_e = x_e/q_i$ for each $e\in S_i$ for each $i\in [k]$.  Moreover, for $w_e:=q_i c_e$ for each $e\in S_i$ for each $i\in [k]$, the minimum $w$-weight of a spanning tree in $G$ is at most twice the optimum objective value of the above LP since the integrality gap of the cut-based LP-relaxation for min-cost spanning tree is at most $2$ \cite[Theorem 22.9]{vazirani2001approximation}. 

\end{proof}

We now prove Theorem \ref{thm:multi-use}. 
\begin{proof}[Proof of Theorem \ref{thm:multi-use}]
We consider the following algorithm. Let $w_e:=q_i c_e$ for each $e\in S_i$ for each $i\in [k]$. Find a minimum $w$-weight spanning tree $T'$ in $G$ and return the vector $x\in \Z^E$, where 
\[
x_e\coloneqq
\begin{cases}
    q_i &\text{ if }e\in S_i\cap T' \text{ for some } i \in [k],\\
    0 &\text{ if }e\in E\setminus T'.
\end{cases}
\]
The runtime of the algorithm is the time to find a min-weight spanning tree which is polynomial. Next, we show that the vector $x$ returned by the algorithm is feasible for $k$-tier FMGC. Let $\emptyset \neq R\subsetneq V$. Since $T'$ is a spanning tree, there exists $e\in T'\cap \delta(R)$. Without loss of generality, let $e\in S_i$. Then, by definition $x_e = q_i$. Therefore, $x(\delta_{T_i}(R))\ge x_e = q_i$ and hence, $x$ is feasible. 

Next, we bound the cost of the solution $x$. 
Let $\text{OPT}$ denote the optimum objective value of the instance, $\text{IP-opt}_{\text{FMGC-rel}}$ denote the optimum objective value of \ref{eq:IP-FMGC-rel} for the same instance, and $\text{LP-opt}_{\text{FMGC-rel}}$ denote the optimum objective value of the LP-relaxation of \ref{eq:IP-FMGC-rel} for the same instance. Then, 
\begin{align}
        \sum_{e\in E}c_ex_e&=\sum_{i=1}^k\sum_{e\in S_i}c_ex_e\\
        &=\sum_{i=1}^k\sum_{e\in S_i}(q_i\cdot c_e)1_{e\in F} \notag\\
        &\le 2\text{LP-opt}_{\text{FMGC-rel}} \quad \quad \text{(by Proposition \ref{prop:FMGC-lp-rel})} \notag\\
        &\leq 2\text{IP-opt}_{\text{FMGC-rel}} \notag\\
        &\leq 2\text{OPT}. \quad \quad \text{(by Lemma \ref{lem:FMGC-feasibility})} \notag
\end{align}

\end{proof}
\section{Conclusion}
Part of the contributions of this work are conceptual: namely, the introduction of a multi-tier network design model and relating it to the multi-objective graph min-cut problem. 
Our $k$-tier Flexible Graph Connectivity problem is a multi-tier generalization of the  $(p, q)$-FGC problem and it captures non-uniform failure scenarios with nested tiers of edge vulnerability.
We gave three approximation results for fixed constant $k$, all of which are inspired by previous work for $(p, q)$-FGC with additional simplifications that help in dealing with the generalized model. 
$k$-tier FGC opens a new direction in network design by imposing hierarchical connectivity requirements. 
An interesting question is whether the dependence on $k$ and $n$ in the approximation factor of Theorem \ref{thm:log-approx} can be eliminated in exchange for a dependence on tier requirement parameters. If $k$ is part of the input, then feasibility verification is already likely to be co-NP-hard (based on connections to multi-objective graph min-cut which is strongly NP-hard when the number of objectives is part of input). 
For $p$-ECSS, there exists a constant approximation and for $(p, q)$-FGC, Ibrahimpur and V\'{e}gh \cite{IV25} designed an $O(\log{n})$-approximation.  For $k$-tier FGC, it is natural to ask whether an approximation that depends only on tier requirements is achievable for every fixed constant $k$. 

\bibliographystyle{abbrv}
\bibliography{references}
\end{document}